\documentclass[10pt,twocolumn]{IEEEtran}
\usepackage{epsfig,latexsym}
\usepackage{float}
 \usepackage{enumerate}
\usepackage{indentfirst}
\usepackage{amsmath}
\usepackage{amssymb}
\usepackage{times}
\usepackage{subfigure}
\usepackage{cite}
\usepackage{url}
\usepackage{color}
\usepackage{setspace}

\definecolor{green}{rgb}{0.796,0.948,0.816}

\sloppy

\newtheorem{Proposition}{Proposition}

\newtheorem{Lemma}{Lemma}

\newtheorem{Remark}{Remark}

\newtheorem{proof}{proof}
\newtheorem{theorem}{$\mathbf{Theorem}$}

\begin{document}


\title{Optimal Power Allocation { Scheme} for Non-Orthogonal Multiple Access\\ with $\alpha$-Fairness}
\author{Peng Xu and Kanapathippillai Cumanan, \IEEEmembership{Member, IEEE}
\thanks{Peng Xu is with the Chongqing Key Laboratory of Mobile Communications
Technology, Chongqing University of Posts and Telecommunications (CQUPT),
 Chongqing, 400065, P. R. China.
Kanapathippillai Cumanan is with Department of Electronic Engineering,
University of York,
York, UK, YO10 5DD.

The work of Peng Xu was supported in part by the Scientific and Technological Research Program of
Chongqing Municipal
Education Commission under Grant KJ1704088, and in part by  Doctoral  Initial Funding
of Chongqing University of Posts and Telecommunications under Grant A2016-84.
The work of K. Cumanan  was supported by H2020- MSCARISE-2015 under grant number 690750.
} \vspace{-2em}}\maketitle
\begin{abstract}
This paper investigates the optimal power allocation scheme
 for  sum throughput maximization  of non-orthogonal multiple access (NOMA) system
  with $\alpha$-fairness. In contrast to the
 existing fairness NOMA models,
 $\alpha$-fairness can only utilize a single scalar to achieve
 different  user fairness levels. Two different channel state information at the transmitter
 (CSIT) assumptions are considered, namely, statistical and perfect CSIT.
For statistical CSIT, fixed target data rates are predefined,
 and the  power allocation problem is solved for sum throughput
 maximization with $\alpha$-fairness, through characterizing
several properties of the optimal power allocation solution.
For perfect CSIT, {
the optimal power allocation is determined to maximize the instantaneous sum rate
  with $\alpha$-fairness,
 where user rates are adapted according to the instantaneous channel state information (CSI)}.
In particular, a simple alternate optimization (AO) algorithm is proposed,
 which is demonstrated to yield the optimal solution.
 Numerical results reveal that,
 at the same fairness level,
  NOMA  significantly outperforms the conventional orthogonal  multiple access (MA)  for both the scenarios with statistical
  and  perfect CSIT.
  \end{abstract}
\begin{IEEEkeywords}
  Non-orthogonal multiple access, $\alpha$-fairness, outage probability, ergodic rate, power allocation.
\end{IEEEkeywords}

\section{Introduction}\label{i}
Non-orthogonal multiple access (NOMA) enables to realize a balanced tradeoff between spectral efficiency and user fairness, which has been recognized
 as a promising   multiple access (MA) technique  for  future fifth generation (5G) networks
  \cite{saito2013system,li20145g,Ding2016Survey,
  Dai2015Non,Wang2016Dynamic,Ding2014Impact,Ding2016MIMO,Ding2016General_MIMO,ding2016design,Cuma_add,
 ding2014performance,liu2015cooperative,Ding2015cooperative_NOMA2,yang2016outage,
 timotheou2015fairness,Xu2016On,liu2016fairness,shi2015outage}.
 In contrast to  the conventional  MA
(e.g., time-division multiple access (TDMA), etc.), NOMA exploits power domain
to simultaneously serves multiple users at different power levels,
where power allocation at the base station  plays a key role in determining the overall performance of the system.
 Downlink NOMA combines  superposition coding at the transmitter and
 successive interference cancellation (SIC) decoding at each  receiver,
 which can be considered as
 a special case of the conventional broadcast channel (BC) \cite{xu2015new}.
 To maintain user fairness, NOMA always allocates more power to the users with weaker channel gains.

Based on the superposition coding, the works in  \cite{cover1972broadcast}
and \cite{bergmans1974simple} explored
  the capacity region of the degraded discrete
memoryless BC and the Gaussian BC with single-antenna
terminals, respectively.
{ On the other hand,  the ergodic capacity and the outage
  capacity/probability of the fading BC
with perfect channel state information
 at the transmitter (CSIT) were established  in \cite{li2001capacity}
 and \cite{li2001capacity_outage}, respectively.}
 For the concept of ergodic capacity, user rates can be adapted  according to the
 instantaneous
 channel state information (CSI);
 while the concept of outage   is more appropriate for
  applications with stringent delay constraints as  a   predefined rate is assumed for each transmission.
 In \cite{zhang2009downlink}, the  performance of outage capacity was analyzed
without CSIT. However, these works for conventional BCs   have not taken into account
the issue of user fairness, which is different from NOMA with fairness constraints.

Recently, the issue of user fairness has received considerable attention in
a series of NOMA systems  \cite{ding2014performance,
 liu2015cooperative,Ding2015cooperative_NOMA2,yang2016outage,
 timotheou2015fairness,Xu2016On,liu2016fairness,shi2015outage}.
{ The works in \cite{ding2014performance,
 liu2015cooperative,Ding2015cooperative_NOMA2,yang2016outage,Cuma_Sig_Process_Lett_J09,Rahul_J11,Cuma_ICC11,Cuma_VTC09}
 adopted fixed power allocation approaches to guarantee user fairness,
 which can only ensure that the users with weaker channel gains are allocated
with more power and might suffer from poor
user fairness when some users have very poor channel conditions.}
In order to enhance user fairness,
an appropriate power allocation should be adopted at the base station
  for each user message in the superposition coding, similar to the works in
  \cite{timotheou2015fairness,Xu2016On,liu2016fairness,shi2015outage}. In \cite{timotheou2015fairness},
  the max-min and min-max power allocation schemes are proposed to maximize the
  ergodic rate and minimize the outage probability, respectively, {  whereas
  the common outage probability of NOMA with
   one-bit  feedback is minimized in \cite{Xu2016On}}.  { A throughput maximization scheme for a
   multiple-input multiple-output (MIMO) NOMA system is presented in \cite{liu2016fairness}
    by solving the max-min fairness problem}.
   { However, the schemes proposed in \cite{timotheou2015fairness,Xu2016On,liu2016fairness}
   can only achieve absolute fairness\footnote{ The term ``absolute fairness'' means that all users have
     the same performance (e.g., the same outage probability or the same ergodic rate).}, where  the system
   throughput is limited  by the user with the worst channel gain.}
      In \cite{shi2015outage}, the power allocation  approach has been proposed { to maximize}
       the minimum weighted success probability, where  a weighting  vector is exploited to adjust  fairness levels.
However,   { the design of the optimal weighting
vector is a challenging issue, which has not been addressed in \cite{shi2015outage}}. Most recently,  { a proportional fairness
   over a time-domain window size has been presented for NOMA in  \cite{liu2016pro}}.

The main objective of this paper is to investigate the optimal power allocation scheme
 for  sum throughput maximization  of the NOMA system with $\alpha$-fairness constraints.
    {  In existing fairness models in \cite{timotheou2015fairness,Xu2016On,liu2016fairness},
     only absolute fairness can be achieved; while in \cite{shi2015outage},
  a weighting
vector are exploited to adjust the fairness level. However,
 $\alpha$-fairness only utilizes a single scalar, denoted as $\alpha$, to achieve
 different  user fairness levels and well-known
efficiency-fairness tradeoffs \cite{shi2014fairness}.}
 The concept of $\alpha$-fairness was first introduced in \cite{JMo2000Fair} for a fair end-to-end congestion control,
 which generalizes proportional  and  max-min fairness approaches.
 Since then, $\alpha$-fairness
 has been widely incorporated  in  a series of fairness optimization models for
 resource allocation and congestion control (e.g., \cite{Lan2010An,uchida2011an1,song2015energy}).
    More details on fairness in wireless networks can be found in \cite{shi2014fairness}
and the references therein. In general, increasing $\alpha$
results in higher user fairness \cite{Lan2010An}.
  For instance,
 maximum efficiency can be achieved by setting $\alpha=0$,  {   whereas proportional and max-min fairness
 can be achieved by setting $\alpha=1$ \cite{kelly1998rate,nguyen2006a},
 and  $\alpha\rightarrow \infty$  \cite{JMo2000Fair}, respectively.}

In this paper, a downlink NOMA system with two different CSIT assumptions are considered:
statistical and perfect CSIT.
For statistical CSIT, fixed target data rates should be predefined
 for all users, for which, we first analyze the outage probability of each user,
and then formulate the  power allocation optimization framework for sum throughput
 maximization with $\alpha$-fairness. However, this optimization problem is not convex in nature
 due to the non-convex objective function.
  To circumvent this non-convex issue,
 we reformulate the original problem
into an equivalent problem with a simple expression.
Analysis reveals that the equivalent  transformed  problem is convex for the case of $\alpha\geq 1$
and still non-convex for the case of $\alpha<1$. However, for
the case $\alpha<1$, the structure of the optimal solution is characterized
based on
some properties of the optimal power allocation solution,  which demonstrates
that the problem turns out to be convex if we fix the first power parameter and the number of
power parameters that are below $(1-\alpha)/{2}$.

For perfect CSIT, {
the  power allocation  problem is formulated to maximize the instantaneous sum rate
 with $\alpha$-fairness, where user rates are adapted according
 to instantaneous CSI}. We first transform this optimization
   problem into an equivalent problem  by setting a series of
 parameters to denote the sum power allocated to a group of users. Then,
we demonstrate that there exists only one solution to satisfy the Karush-Kuhn
Tucker (KKT) conditions. Furthermore,
a simple alternate optimization (AO) algorithm is proposed to yield  the optimal solution
through solving KKT conditions.
{ The algorithm is developed based on the idea of  AO approach, where each
  KKT condition is solved  by fixing the
 other corresponding parameters. In addition, it is shown that each variable is monotonically increasing in each iteration of the algorithm and therefore it converges.}

 Numerical results reveal that parameter $\alpha$ can adjust the fairness level in terms of
 fairness index \cite{Jain1984@FI}  for both NOMA and TDMA. In addition,
 for the same required fairness index ,
  NOMA  outperforms TDMA in terms of both the sum throughput with statistical CSIT
  and ergodic sum rate with perfect CSIT.
  Moreover, the proposed algorithm for ergodic rate maximization
 converges with less number of iterations than the conventional interior point algorithm in most scenarios.

Throughout this paper,  $\mathbb{P}(\cdot)$ and $\mathbb{E}(\cdot)$
are used  to denote the probability of an event and the expectation of a random variable.
Moreover, $[1:K]$ represents the set $\{1,\cdots,K\}$,
and $\{x_i\}$ indicates the sequence formed by all the possible $x_i$'s.
Furthermore, $\log(\cdot)$ and
 $\ln(\cdot)$ stands for the logarithm with base 2 and
the natural logarithm, whereas $\exp(\cdot)$ denotes the exponential function.



\section{System Model and Problem Formulations}\label{ii}
A downlink NOMA system is considered  with one single-antenna base station and $K$ single-antenna users.
For this network setup, quasi-static block fading is assumed,
where the  channel gains from the base station to all users are  constant during one fading block, but change independently from one fading block to  the next fading block.
The  base station transmits $K$ messages to the users using the NOMA scheme, i.e., it sends
a superposition codeword $x=\sum_{k=1}^K \sqrt{\tilde{P}_k}s_k$ during each fading block, where $s_k$ is
the  signal intended for   user $k$ with $E[|s_k|^2]=1$ and $\tilde{P}_k$ is the power allocated to
 user $k$, which satisfies $\sum_{k=1}^K \tilde{P}_k \leq P$.
 The received signal at user $k$ can be expressed as
{ \begin{align}\label{superposition}
  y_k=h_k\sum_{i=1}^K \sqrt{\tilde{P}_i}s_i+n_k, \ k\in[1:K],
\end{align}}
where the noise  $n_k$ at   user $k$  is assumed to be an additive white Gaussian noise with zero
mean and { unit variance}, and  $h_k$ denotes the channel gain from the base station to
 user $k$. Specifically, $h_k=d_k^{-{\beta}/{2}}g_k$, where  $g_k$ is a normalized
 Rayleigh fading channel
gain with unit variance, $d_k$ is the distance between the
base station and user $k$, and $\beta$ is the path loss exponent. 
  Without loss of generality, it is assumed that $d_1>d_2>\cdots>d_K$.
 In addition, it is also assumed that noises and channel gains associated with all users are mutually independent from each other. In this paper, we consider the case where each superposition codeword spans only a
single fading block.

The users  employ SIC to decode their messages, where the { user order (or
equivalently, decoding order)} is determined by the base station
according to the CSIT assumption discussed later in this section.
It can be assumed without loss of generality that user $k$ is allocated with index $k$.
In the SIC process,  user $k$ will sequentially  decode the messages
 of   users $l $, $l\in[1:k]$ and then successively remove these messages from its  received signal.
When user $k$ decodes the message of user $l$, the signal-to-interference-plus-noise ratio (SINR) can be written as
 \begin{align}\gamma_{l}^{(k)}=\frac{\tilde{P}_lH_k}{H_k\sum_{m=l+1}^K\tilde{P}_m+1},
 \ l\in[1:k],\end{align}
 where we define $H_k=|h_k|^2$, $\forall k\in[1:K]$, for simplicity; obviously, $H_k$  follows an exponential distribution with a mean
$d_k^{-{\beta}}$.

 Next, we will investigate optimal power allocation   from a fairness
perspective, under two main CSI assumptions of  statistical and perfect CSIT.
To model fairness, we adopt $\alpha$-fair utility function \cite{Jain1984@FI}
\begin{align}\label{u(F_k)}
  u_{\alpha}(x)\triangleq\left\{\begin{array}{ll}
    \ln(x), &\textrm{if } \alpha=1,\\
     \frac{x^{1-\alpha}}{1-\alpha}, &\textrm{if } \alpha\neq 1,\alpha\geq 0.
  \end{array}\right., \ x>0,
\end{align}
where $x$ could be throughput or instantaneous rate shown later in this section,
 and different values of $\alpha$ represents different  fairness levels.
Note that the choices of $\alpha=0$ and $\alpha\rightarrow \infty$ represent
 no fairness  and absolute  fairness requirements, respectively.

 \subsection{NOMA with Statistical CSIT}
{ For the  statistical CSIT scenario,  only  statistics of fading channels
    (including channel distributions, means and variances) are available at the transmitter,
and hence fixed target data rates should be predefined
 for all users.   The overhead cost in this scenario would be low as
 the variation of channel statistics is   much more slower than  that of instantaneous CSI.
 Moreover, the user order is determined based on the distance from the base station to each user, where
a user with a larger distance is assigned  with a smaller order index.
Since it is assumed  that $d_1>d_2\cdots>d_K$ previously in this section,
 user $k$ is always allocated with order index $k$. }
 Assume that the base station transmits one message to each user in each block  with
the same fixed target rate $r_0$ bits per channel use (BPCU). \footnote{ Note that  setting different fixed target rates for different users can improve the sum throughput. However,
user fairness will be affected by such a different-rate scheme.
For instance, absolute fairness is
difficult to be achieved if the  data rates of the users are not the same.
 Motivated by this,  the same fixed rate $r_0$  for each user is assumed in this paper.} For this transmission scenario, the outage probability needs  to be evaluated, and the  outage probability for user $k$ can
be expressed as
{ \begin{align}\label{Pr_k}
  &\quad\mathbb{P}_k=\mathbb{P}\left\{\gamma_{l}^{(k)}<\hat{r}_0,\textrm{ for some }
  l\in[1:k]\right\}\nonumber\\
  &=\mathbb{P}\left\{H_k<\max\left\{\frac{\hat{r}_0}{\hat{P}_1},
  \cdots,\frac{\hat{r}_0}{\hat{P}_k}\right\}\right\}\nonumber\\
  &=1-\exp\left(-\max\left\{\frac{\hat{r}_0d_k^{\beta}}{\hat{P}_1},\cdots,
  \frac{\hat{r}_0d_k^{\beta}}{\hat{P}_k}
  \right\}\right),
\end{align}}
where $\hat{r}_0\triangleq 2^{{r}_0}-1$, $\hat{P}_k\triangleq \tilde{P}_k-\hat{r}_0\sum_{m=k+1}^K\tilde{P}_m$ which
can be considered as an {\em equivalent} power for user $k$.
Note that in \eqref{Pr_k}, it is implicitly assumed that
  \begin{align}\label{noma_constraint}
    {\tilde{P}_{k}\geq \hat{r}_0\sum_{m=k+1}^K\tilde{P}_{m}},\ \forall k\in[1:K-1].
  \end{align}
  This power constraint is widely incorporated in general for   NOMA systems as in \cite{ding2014performance,Ding2015cooperative_NOMA2,timotheou2015fairness,shi2015outage}, where
  more power is allocated to a user  with weak channel gains  to guarantee user fairness.

The power constraint can be rewritten as \cite{timotheou2015fairness,Xu2016On}
\begin{align}
  \sum_{k=1}^K (\hat{r}_0+1)^{k-1}\hat{r}_0 \hat{P}_k\leq P. \label{power_constraint}
\end{align}

Furthermore, the throughput of user $k$ is denoted as
\begin{align}\label{F_k}F_k(\{\hat{P}_k\})&\triangleq r_0(1-\mathbb{P}_k)\nonumber\\
&=r_0\exp\left(-\max\left\{\frac{\hat{r}_0d_k^{\beta}}{\hat{P}_1},
\cdots,\frac{\hat{r}_0d_k^{\beta}}{\hat{P}_k}\right\}\right).\end{align}
 To investigate the sum throughput maximization
 with $\alpha$-fairness, we formulate the following optimization problem:
   \begin{subequations}\label{new_problem_long1}
\begin{align}
 \textrm{(F.P1) }&\max_{\{\hat{P}_k\}} \sum_{k=1}^Ku_{\alpha}\left(F_k(\{\hat{P}_k\})\right)
 \label{obj1}\\
   \textrm{s.t. }&
  \eqref{power_constraint},\ \hat{P}_k\geq 0, \ k\in[1:K].\label{constraint11}
\end{align}
\end{subequations}

\subsection{NOMA with Perfect CSIT}
{ In the scenario of  perfect CSIT  in each block,}
user's data rates can be adapted according to the channel conditions without any outage.
However,  the base station needs  to
estimate each channel gain based on pilot symbols transmitted by the users,
which is different from the scenario of statistical CSIT assumption in the previous subsection.
The user order is determined based on instantaneous CSI at the beginning of each fading block.
It is  assumed without loss of generality that $H_1\leq H_2\leq \cdots \leq H_K$.
The  instantaneous  rate for user $k$ can be expressed as \cite{bergmans1974simple}
\begin{align}\label{ratek}
  R_k(\{\tilde{P}_i\})=\ln \left(\frac{1+H_k\sum_{i=k}^K\tilde{P}_i}{1+H_k\sum_{i=k+1}^K\tilde{P}_i}\right),\
  k\in[1:K],
\end{align}
where the rate is measured in nats per channel user (NPCU). Note that  NPCU has been adopted here for
mathematical brevity, however, it can be easily converted  into
 BPCU. The ergodic sum rate can be expressed as $\mathbb{E}\left[\sum_{k=1}^KR_k\right]$.

 To determine the optimal power allocation to maximize the   instantaneous sum rate
 with  $\alpha$-fairness, we formulate the following optimization problem:
   \begin{subequations}\label{problem_rate1}
\begin{align}
 \textrm{(R.P1) }&\max_{\{\tilde{P}_i\}} \sum_{k=1}^K u_{\alpha}(R_k(\{\tilde{P}_i\}))
 \label{objR1}\\
   \textrm{s.t. }&\sum_{i=1}\tilde{P}_i\leq P,\label{constraintR1}
  \\& \tilde{P}_i\geq 0, \ i\in[1:K].
\end{align}
\end{subequations}
%

{ \begin{Remark}Although the Rayleigh fading channel model
 is considered in this paper,
  the formed optimization problems can be easily extended
to more practical channel models, such as the widely used Saleh-Valenzuela multi-path model
\cite{Gao2016Energy}. In particular,  the extensions of problem (R.P1)  to the other
channel models are straightforward as the instantaneous  rates in \eqref{ratek}
are also valid for any other  channel distributions; whereas the outage probabilities in \eqref{Pr_k}
as well as problem (F.P1)  should be modified according to the  channel distribution.
The study of the other multi-path channel models is out of the scope of this paper.
\end{Remark}}

{ \begin{Remark}Note that  the  power allocation problems with
 $\alpha$-fairness will be more complicated  for the scenario
of  multiple antennas at the base station, where
  the optimal user ordering scheme in  the SIC process is still an open problem for MIMO-NOMA \cite{Ding2016MIMO,Ding2016General_MIMO,ding2016design}.
 Thus, for MIMO-NOMA with $\alpha$-fairness,   a possible solution approach is to
   utilize the sub-optimal user ordering schemes in
 \cite{Ding2016MIMO,Ding2016General_MIMO,ding2016design},
and then form  the precoding optimization problems at the base station.
More details of  MIMO-NOMA with $\alpha$-fairness are out of the scope of this paper,
which would be an interesting future direction.
\end{Remark}}

\section{Optimal Power Allocation with \\Statistical CSIT}\label{iii}
In this section, we solve problem (F.P1) to obtain the optimal power allocation scheme for
sum throughput maximization with $\alpha$-fairness.

\subsection{Problem Transformation}
In this subsection, we first convert the problem (F.P1) into a more simple tractable optimization framework. { As the first step in this transformation,
we can prove the following inequality condition on the optimal power allocation of problem (F.P1)\cite{timotheou2015fairness,Xu2016On}:}
\begin{align}\label{constraint12}
  \hat{P}_1\geq \hat{P}_2\geq\cdots\geq \hat{P}_K.
\end{align}
The details of the proof  are omitted here for simplicity.
In addition,  $F_k$ in  \eqref{F_k} can be simplified as
\begin{align}\label{F_k2}
   F_k(\hat{P}_k)=r_0\exp\left(-\frac{\hat{r}_0d_k^{\beta}}{\hat{P}_k}
  \right).
\end{align}

By denoting $P_k\triangleq \hat{P}_k/(\hat{r}_0d_k^{{\beta}})$,
$F_k$ in \eqref{F_k2} can be represented as
\begin{align}\label{F_k3}
   F_k({P}_k)=r_0\exp\left(-\frac{1}{{P}_k}
  \right).
\end{align}
On the other hand, the constraints in \eqref{power_constraint} and \eqref{constraint12} can be rewritten as
\begin{align}
   &\hat{r}_0^2\sum_{k=1}^K \Gamma_k{P}_k\leq P, \textrm{ where }\Gamma_k\triangleq
   (\hat{r}_0+1)^{k-1}d_k^{{\beta}}, \label{power_constraint2}\\
   &d_1^{{\beta}} {P}_1\geq d_2^{{\beta}}{P}_2\geq\cdots\geq d_K^{{\beta}}{P}_K,\label{power_inequality}
\end{align}
respectively. Now, problem (F.P1) can be reformulated as
   \begin{subequations}\label{problem2}
\begin{align}
 \textrm{(F.P2) }&\max_{\{{P}_k\}} \sum_{k=1}^Ku_{\alpha}(F_k(P_k))
 \label{obj2}\\
   \textrm{s.t. }&
  \eqref{power_constraint2} \textrm{ and } \eqref{power_inequality},\
   {P}_k\geq 0, \ k\in[1:K].\label{constraint21}
\end{align}
\end{subequations}

\subsection{Optimal Power Allocation}
In this subsection, we  solve the power allocation  problem (F.P2) for
different cases with the corresponding values of $\alpha$. Note that it is  assumed that the distances of the users are significantly different from each other, such that\footnote{
  The assumption on the distances
 in \eqref{assumption} is reasonable in practical NOMA systems.
For example, multiuser superposition transmission (MUST), a
downlink two-user version of NOMA, has been included in 3rd
generation partnership project long-term evolution advanced
(3GPP-LTE-A) networks \cite{LTE}. For MUST, the base station selects two users, which are far
 from and near to the base station, respectively. Obviously,
  the distance difference
between these two selected users is significantly large.}
{ \begin{align}\label{assumption}
  \frac{d_i^{\beta}}{d_j^{\beta}}>\frac{(\hat{r}_0+1)^{j-1}}{(\hat{r}_0+1)^{i-1}},
  \textrm{ i.e., } \Gamma_i>\Gamma_j,\ \forall i<j.
\end{align}}

\subsubsection{Case $0\leq\alpha<1$}
In this case, based on \eqref{u(F_k)} and \eqref{F_k3},
 problem (F.P2) can be expressed   as
 \begin{subequations}\label{problem3}
\begin{align}
 \textrm{(F.P3) }&f_{(F.P3)}\triangleq\max_{\{{P}_k\}} \quad \sum_{k=1}^K\exp\left(-\frac{1-\alpha}{P_k}\right)
 \label{obj6}\\
   \textrm{s.t. }&
  \eqref{power_constraint2} \textrm{ and } \eqref{power_inequality},\
   {P}_k\geq 0, \ k\in[1:K].\label{constraint6}
\end{align}
\end{subequations}
Problem (F.P3) is challenging to solve due to the non-convex objective function.
To tackle this issue, we first present the following propositions on the objective
 function and the optimal solution.

\begin{Proposition}\label{proposition1}
 When $0\leq \alpha \leq 1$, the function $G(x)\triangleq \exp\left(-\frac{1-\alpha}{x}\right)$ is convex for { $x\in[0,\frac{1-\alpha}{2})$},
  and concave  for { $x\in[\frac{1-\alpha}{2},\infty)$}.
  \begin{proof}
  The second  derivative of $G (x)$ can be derived as
  { \begin{align}G ''(x)=&-\frac{2(1-\alpha)}{x^3}\exp\left(-\frac{1-\alpha}{x}\right)\nonumber\\
 & +\frac{(1-\alpha)^2}{x^4}\exp\left(-\frac{1-\alpha}{x}\right).\end{align}}
   Thus, one can observe that
  $G ''(x)>0$  if { $x\in[0,\frac{1-\alpha}{2})$}, $G ''(x)=0$  if
  { $x=\frac{1-\alpha}{2}$}, and
  $G ''(x)<0$  if { $x\in[\frac{1-\alpha}{2},\infty)$}.
  \end{proof}
\end{Proposition}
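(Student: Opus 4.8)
The plan is to reduce the whole statement to a single sign computation for $G''$. Write $c\triangleq 1-\alpha$, so that $c\in[0,1]$ and $G(x)=\exp(-c/x)$ for $x>0$. If $c=0$, i.e.\ $\alpha=1$, then $G\equiv 1$ is affine (hence simultaneously convex and concave) and the threshold $\frac{1-\alpha}{2}$ degenerates to $0$, so the claim holds vacuously; I may therefore assume $c>0$ in what follows.

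First I would differentiate twice. By the chain rule, since $\frac{d}{dx}(-c/x)=c/x^{2}$, one gets $G'(x)=\frac{c}{x^{2}}\exp(-c/x)$. Differentiating again with the product and chain rules yields
\begin{align}
G''(x)&=\frac{c^{2}}{x^{4}}\exp\!\left(-\frac{c}{x}\right)-\frac{2c}{x^{3}}\exp\!\left(-\frac{c}{x}\right)\nonumber\\
&=\frac{c}{x^{3}}\exp\!\left(-\frac{c}{x}\right)\!\left(\frac{c}{x}-2\right),
\end{align}
which, after substituting $c=1-\alpha$, is exactly the displayed expression for $G''$ (the two additive terms merely appearing in the opposite order).

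The last step is to read off signs. For every $x>0$ the factors $\exp(-c/x)$ and $c/x^{3}$ are strictly positive, so $G''(x)$ has the same sign as $\frac{c}{x}-2$: it is positive for $x<c/2$, zero at $x=c/2$, and negative for $x>c/2$. Since $c/2=\frac{1-\alpha}{2}$, the standard second-derivative criterion gives convexity of $G$ on $(0,\frac{1-\alpha}{2})$ and concavity on $(\frac{1-\alpha}{2},\infty)$. Extending $G$ continuously by $G(0)\triangleq 0$ (note $G(x)\to 0$ and $G'(x)\to 0$ as $x\to 0^{+}$, since the exponential dominates the polynomial) lets one include the endpoint $0$ and write the convex interval as $[0,\frac{1-\alpha}{2})$, while continuity of $G''$ at $x=\frac{1-\alpha}{2}$, where it vanishes, lets one adjoin that point to the concave side, giving $[\frac{1-\alpha}{2},\infty)$; this is precisely the assertion. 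There is essentially no obstacle in this argument: the only items deserving a word are the degenerate case $\alpha=1$ and the continuous extension at $x=0$ that makes the half-open interval $[0,\frac{1-\alpha}{2})$ meaningful.
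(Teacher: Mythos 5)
Your proposal is correct and follows essentially the same route as the paper: compute $G''$, obtain the same two-term expression, and read off its sign relative to the threshold $x=\frac{1-\alpha}{2}$. The extra care you take with the degenerate case $\alpha=1$ and the continuous extension at $x=0$ is a minor refinement the paper omits, but it does not change the argument.
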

\begin{Proposition}\label{proposition2}
 At the optimal solution of problem (F.P3), $P_k^*\geq P_{k-1}^*$, $\forall k\in[2:K]$.
\end{Proposition}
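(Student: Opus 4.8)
The plan is to argue by contradiction through a power‑exchange argument. Suppose the statement fails, so that at some optimal solution $\{P_k^*\}$ of (F.P3) one has $P_{k-1}^*>P_k^*$ for an index $k\in[2:K]$. I would first record two easy facts. (a) The ordering constraint in \eqref{power_inequality} linking indices $k-1$ and $k$ is then \emph{strictly} slack: since $d_{k-1}>d_k$, equality $d_{k-1}^{\beta}P_{k-1}^*=d_k^{\beta}P_k^*$ would give $P_{k-1}^*=(d_k^{\beta}/d_{k-1}^{\beta})P_k^*<P_k^*$, a contradiction. (b) The power budget \eqref{power_constraint2} is active and $P_k^*>0$ for every $k$, because the function $G$ of Proposition~\ref{proposition1} is strictly increasing (a short argument, using that $\Gamma_K$ is the smallest $\Gamma_k$ by \eqref{assumption}, rules out $P_k^*=0$).

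Next I would apply the budget‑neutral local move that decreases $P_{k-1}$ by $t$ and increases $P_k$ by $(\Gamma_{k-1}/\Gamma_k)\,t$, so that $\sum_j\Gamma_jP_j$ is unchanged. By (a) and (b) the two neighbouring ordering constraints only get easier under this move ($d_{k-1}^{\beta}P_{k-1}$ decreases, $d_k^{\beta}P_k$ increases), so it is feasible for all small $t>0$; optimality of $\{P_k^*\}$ therefore forces the objective's directional derivative to be non‑positive, i.e.
\begin{align}\label{foc}
  G'(P_{k-1}^*)\;\geq\;\frac{\Gamma_{k-1}}{\Gamma_k}\,G'(P_k^*)\;>\;G'(P_k^*),
\end{align}
the last step using $\Gamma_{k-1}>\Gamma_k$ from \eqref{assumption} and $G'>0$. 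By Proposition~\ref{proposition1}, $G'$ is strictly increasing on $[0,\tfrac{1-\alpha}{2}]$ and strictly decreasing on $[\tfrac{1-\alpha}{2},\infty)$; combined with $P_{k-1}^*>P_k^*$, inequality \eqref{foc} is incompatible with both $P_{k-1}^*,P_k^*\geq\tfrac{1-\alpha}{2}$, so it remains to handle the case in which $P_k^*$ (and possibly $P_{k-1}^*$) lies in the convex range $[0,\tfrac{1-\alpha}{2})$.

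In that case I would restrict the objective to the two variables $P_{k-1},P_k$ along the budget‑neutral line above: by Proposition~\ref{proposition1} this restriction is \emph{strictly convex} in a neighbourhood of $(P_{k-1}^*,P_k^*)$ when both coordinates lie in the convex range, so $(P_{k-1}^*,P_k^*)$ cannot be an interior maximizer of it; since $P_k^*=0$ and $P_{k-1}^*=0$ are excluded and the $(k-1,k)$ constraint is slack, one of the (at most two) \emph{neighbouring} ordering constraints in \eqref{power_inequality} must be active, and I would propagate the argument along the resulting chain of active constraints. The engine for this closing step is a swap: replacing $(P_{k-1}^*,P_k^*)$ by $(P_k^*,P_{k-1}^*)$ leaves the objective unchanged yet \emph{strictly decreases} the consumed power $\Gamma_{k-1}P_{k-1}+\Gamma_kP_k$ (because $\Gamma_{k-1}>\Gamma_k$); whenever this swap — or the partial swap stopped at the first point where the $(k-1,k)$ constraint becomes tight — is feasible, a subsequent uniform upscaling of all powers by some $\lambda>1$, admissible since \eqref{power_inequality} is scale‑invariant and \eqref{power_constraint2} now has slack, strictly raises the objective, the desired contradiction. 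I expect the real obstacle to be exactly this: checking feasibility of the (partial) swap inside the non‑concave regime where the objective is not concave; the separation hypothesis \eqref{assumption}, which yields $\Gamma_{k-1}/\Gamma_k=d_{k-1}^{\beta}/((\hat{r}_0+1)d_k^{\beta})>1$, is precisely what makes the exchange strictly profitable and should carry the argument through.
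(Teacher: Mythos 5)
Your first-order approach stalls exactly where you say it does, and the argument you sketch to close the convex-range case does not go through. The directional-derivative inequality $G'(P_{k-1}^*)\geq(\Gamma_{k-1}/\Gamma_k)G'(P_k^*)>G'(P_k^*)$ only contradicts $P_{k-1}^*>P_k^*$ where $G'$ is decreasing; on $[0,\tfrac{1-\alpha}{2})$, where $G'$ is increasing, it is perfectly consistent with the assumed ordering, so everything rests on your closing step, which has two unrepaired holes. First, when one of the \emph{neighbouring} ordering constraints in \eqref{power_inequality} is active, the budget-neutral line is feasible only one-sidedly, and a strictly convex restriction can still attain its constrained maximum at the endpoint $t=0$; ``not an interior maximizer'' therefore does not finish that branch, and the ``propagate along the chain of active constraints'' step is never carried out. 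Second, the full swap $(P_{k-1}^*,P_k^*)\mapsto(P_k^*,P_{k-1}^*)$ generically violates $d_{k-1}^{\beta}P_{k-1}\geq d_k^{\beta}P_k$ (e.g.\ $d_{k-1}^{\beta}=2$, $d_k^{\beta}=1$, $P_{k-1}^*=3$, $P_k^*=1$ gives $2<3$ after the swap), and the ``partial swap stopped at tightness'' followed by uniform upscaling is precisely the step you flag as an obstacle and leave unverified. As written, the proof is incomplete in the only regime that actually matters.

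The missing idea is that no convexity analysis is needed at all. The paper's proof is a single asymmetric exchange: assuming $i<j$ with $P_i^*>P_j^*$, replace the pair by $(P_j^*,\,P_i^*+\epsilon)$ with $\epsilon=(P_i^*-P_j^*)\left(\frac{\Gamma_i}{\Gamma_j}-1\right)$, which preserves $\Gamma_iP_i+\Gamma_jP_j$; since $\Gamma_i>\Gamma_j$ by \eqref{assumption}, $\epsilon>0$, so the objective changes by $G(P_i^*+\epsilon)-G(P_i^*)>0$ using only the monotonicity of $G$ --- no case split on $\tfrac{1-\alpha}{2}$, no second-order argument, no rescaling. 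In other words, rather than swapping symmetrically and trying to recycle the freed budget via a global upscaling, pour the surplus directly into the larger power; that is what makes the exchange strictly profitable in one line. (For what it is worth, the paper's proof is itself silent about feasibility of the perturbed point with respect to \eqref{power_inequality}, so your instinct that the ordering constraints are the delicate part is not misplaced; but the monotonicity-only exchange is the simplification your proposal is missing.)
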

\begin{proof}
  This proposition can be proven by {\em reduction to absurdity}.
Suppose that for the optimal power $\{P_k^*\}$ of problem
(F.P3), there exist $i$ and $j$, $i,j\in[1:K]$,  such that
$i<j$ and $P_i^*>P_j^*$. Now, consider another power pair $(P_i,P_j)\triangleq (P_j^*,P_i^*+\epsilon)$,
where we define \begin{align} \epsilon\triangleq(P_i^*-P_j^*)\left(\frac{\Gamma_i}{
\Gamma_j}-1\right),\end{align} such that
\begin{align}P_i
\Gamma_i+P_j\Gamma_j=
P_i^*\Gamma_i+P_j^*\Gamma_j.\end{align}
From \eqref{assumption}, one can observe that $\epsilon>0$.
Furthermore, it can be obtained that \begin{align}
G (P_i^*)+G (P_j^*)< G (P_i)+G (P_j),\end{align}
since $\epsilon>0$ and
 $G(x)$ is a monotonically increasing function,
which contradicts with the optimality of $(P_i^*,P_j^*)$. This completes the proof of this proposition.
\end{proof}
\begin{Proposition}\label{proposition3}
 For the optimal solution of problem (F.P3), if there are $k_0$ power values, $P_k^*$'s,
 that are below $\frac{\alpha-1}{2}$, then the constraint in \eqref{power_inequality} is
  binding for these power values, i.e., $d_i^{\beta}P_i^*=d_j^{\beta}P_j^*$, $\forall i,j\in[1:k_0]$.
\end{Proposition}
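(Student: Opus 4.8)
The plan is to argue by contradiction with a local exchange argument, the engine being the strict convexity of $G(x)=\exp\!\left(-\frac{1-\alpha}{x}\right)$ on $\left(0,\frac{1-\alpha}{2}\right)$ from Proposition~\ref{proposition1}. First, some reductions. By Proposition~\ref{proposition2} the optimal powers satisfy $P_1^*\le P_2^*\le\cdots\le P_K^*$, so the indices with $P_k^*<\frac{1-\alpha}{2}$ form a prefix $\{1,\dots,k_0\}$. Also $P_1^*>0$: if $P_1^*=0$ then the chain \eqref{power_inequality} forces $P_k^*=0$ for all $k$ and the objective \eqref{obj6} vanishes, whereas $P_k=\delta/\Gamma_k$ with $\delta>0$ small is feasible with positive objective. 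Hence all $P_k^*>0$, the nonnegativity constraints are inactive, and since \eqref{power_inequality} is a chain it suffices to show $d_k^\beta P_k^*=d_{k+1}^\beta P_{k+1}^*$ for every $k\in[1:k_0-1]$.

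Next, the perturbation. Suppose some such link is slack, say $d_m^\beta P_m^*>d_{m+1}^\beta P_{m+1}^*$ with $m\le k_0-1$. Split $[1:k_0]$ into maximal runs joined by tight links and let $A=[a:m]$, $B=[m+1:b]$ be the two runs meeting at $m$, so $d_k^\beta P_k^*\equiv v_A$ on $A$ and $d_k^\beta P_k^*\equiv v_B$ on $B$ with $v_A>v_B$. Consider the one-parameter family that shifts the two common levels, $v_A\mapsto v_A-s$ and $v_B\mapsto v_B+\tau(s)$, where $\tau(s)>0$ is proportional to $s$ and chosen so that the weighted-power sum in \eqref{power_constraint2} is unchanged, all other powers being fixed. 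For small $s>0$ this is feasible: the links inside $A$ and inside $B$ remain equalities, the strict gap $v_A>v_B$ absorbs the change at the link $(m,m+1)$, the link above $A$ (if any) and the link below $B$ only relax, and every perturbed argument stays in $\left(0,\frac{1-\alpha}{2}\right)$ because the unperturbed ones are strictly interior; more generally, for $s$ in a symmetric neighborhood of $0$ all perturbed arguments remain in that interval, so the objective along this family, which is a constant plus a sum of terms $G\big((v_A-s)/d_k^\beta\big)$ and $G\big((v_B+\tau(s))/d_k^\beta\big)$ — each $G$ composed with an affine map of $s$ — is a strictly convex function $\Psi(s)$, whence $\Psi(s)+\Psi(-s)>2\Psi(0)$ for $s\ne0$.

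Finally, the conclusion. Since the $s>0$ direction is always feasible, optimality forces $\Psi(s)\le\Psi(0)$ for small $s>0$, hence $\Psi(-s)>\Psi(0)$ for small $s>0$: the reverse shift $v_A\mapsto v_A+s$, $v_B\mapsto v_B-\tau(s)$ strictly improves the objective. This reverse shift is also feasible for small $s$ — the link above $A$ is strictly slack by maximality (or $a=1$), the arguments on $A$ stay below $\frac{1-\alpha}{2}$, and those on $B$ stay positive — except when the link just below $B$ is active, which by maximality of $B$ can happen only if $b=k_0$ and $d_{k_0}^\beta P_{k_0}^*=d_{k_0+1}^\beta P_{k_0+1}^*$. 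In every case but this one the improvement contradicts optimality, which proves the proposition. That remaining corner, where the run $B$ abuts index $k_0+1$ whose optimal power lies in the concave region $\left[\frac{1-\alpha}{2},\infty\right)$, is the step I expect to be the main obstacle: the clean two-block exchange no longer suffices, and one must either extend the perturbation to move $P_{k_0+1}^*$ as well (keeping the $(k_0,k_0+1)$ link tight) and trade a convex-region gain on $A$ against a concave-region loss at $k_0+1$ while respecting every ordering constraint, or instead rule the configuration out through the first-order (KKT) stationarity of $\{P_k^*\}$ at the optimum.
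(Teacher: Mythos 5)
Your skeleton is the right one and is essentially the paper's: argue by contradiction, use Proposition~\ref{proposition2} to place the $k_0$ small powers in a prefix, and exploit the convexity of $G$ on $\left(0,\frac{1-\alpha}{2}\right)$ via a budget-preserving exchange. But your execution has a genuine gap, which you yourself flag: you extract only second-order information from the exchange ($\Psi$ is strictly convex, hence \emph{one of} the two directions $\pm s$ improves), and since the forward direction cannot improve at an optimum you are forced onto the reverse shift, whose feasibility fails precisely in the corner case $b=k_0$ with the link $(k_0,k_0+1)$ tight. That dead end is avoidable because the first-order information already decides the matter: along your forward, always-feasible family the derivative satisfies $\Psi'(0)>0$, so the feasible direction itself strictly improves the objective and the contradiction is immediate. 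Indeed, for $k\in A$ and $l\in B$ one has $\Gamma_k>\Gamma_l$ by \eqref{assumption} and $G'(P_k^*)\le G'(P_l^*)$ (since $P_k^*\le P_l^*$ and $G'$ is increasing on the convex interval), so every ratio $G'(P_l^*)/\Gamma_l$ on $B$ strictly exceeds every ratio $G'(P_k^*)/\Gamma_k$ on $A$, which is exactly what makes $\Psi'(0)>0$ after the budget normalization $\tau'(0)$. The reverse direction, the run decomposition, and the $k_0+1$ corner case are all unnecessary.

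This is what the paper's proof does, in mean-value-theorem form and without runs: it takes a slack pair $i<j$, perturbs $(P_i^*,P_j^*)\to(P_i^*-\epsilon_1,\,P_j^*+\epsilon_2)$ with $(\epsilon_1,\epsilon_2)=(\epsilon/\Gamma_i,\epsilon/\Gamma_j)$ so that \eqref{power_constraint2} is preserved, and writes the loss as $\epsilon_1 G'(\varepsilon_1)$ and the gain as $\epsilon_2 G'(\varepsilon_2)$. Both comparisons go the right way: $\epsilon_1<\epsilon_2$ because $\Gamma_i>\Gamma_j$, and $0<G'(\varepsilon_1)<G'(\varepsilon_2)$ because $\varepsilon_1<P_i^*\le P_j^*<\varepsilon_2<\frac{1-\alpha}{2}$ and $G''>0$ there (Proposition~\ref{proposition1}). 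Hence the gain strictly exceeds the loss, contradicting optimality. If you want to keep your chain-link framing, simply note that a slack chain has a slack \emph{adjacent} link $(m,m+1)$ with $m\le k_0-1$, apply this two-point perturbation there, and observe that the neighboring links in \eqref{power_inequality} are only relaxed (the one above $m$ because $d_m^\beta P_m$ decreases, the one below $m+1$ because $d_{m+1}^\beta P_{m+1}$ increases) while the slack at $(m,m+1)$ absorbs the tightening; no appeal to KKT stationarity or to the behaviour of $G$ at $P_{k_0+1}^*$ is needed.
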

\begin{proof}
  Please refer to Appendix \ref{proof_proposition3}.
\end{proof}

\begin{Remark}\label{remark_structure}
Based on Propositions  \ref{proposition3},  it follows
that the optimal solution of problem (F.P3) should have the
following structure: there are $k_0$ power values, $(P_1,\cdots,P_{k_0})$,
 satisfying $P_k<\frac{\alpha-1}{2}$ and $P_k=\frac{d_1^{\beta}}{d_k^{\beta}}P_1$, $\forall k\in[1:k_0]$,
 and the rest of $(K-k_0)$ power values satisfying $P_{k}\geq\frac{\alpha-1}{2}$, $\forall k\in[k_0+1,K]$. Therefore, the maximum  value of the objective function can be expressed
 as $f^*_{(F.P3)}=\max_{k_0,P_1,\{P_{k_0+1},\cdots,P_K\}} \quad \sum_{k=1}^{k_0} G\left(\frac{d_1^{\beta}}{d_k^{\beta}}P_1\right)+\sum_{k=k_0+1}^K G(P_k)$.
 \end{Remark}

 From problem (F.P3) and Remark \ref{remark_structure}, one can observe that if  $(k_0,P_1)$ is fixed,     the optimal values of $(P_{k_0+1},\cdots,P_K)$ can be obtained
 by solving the following optimization problem:
  \begin{subequations}\label{problem3}
\begin{align}
 \textrm{(F.P4) }&f_{(F.P4)}(k_0,P_1)
 \triangleq\max_{\{P_{k_0+1},\cdots,P_K\}} \quad \sum_{k=k_0+1}^K G(P_k)
 \label{obj7}\\
   \textrm{s.t. }&
  \hat{r}_0^2\sum_{k=k_0+1}^K \Gamma_k P_k\leq P-\hat{r}_0P_1d_1^{\beta}
  ((\hat{r}_0+1)^{k_0}-1),\label{constraint71}\\
  &d_{k_0+1}^{\beta}P_{k_0+1}\geq\cdots\geq d_K^{\beta}P_K,\label{constraint72}\\
   &{P}_k\geq \frac{\alpha-1}{2},\label{constraint73} \ k\in[k_0+1:K].
\end{align}
\end{subequations}
where $(k_0,P_1)\in\mathcal{S}$, and
 $\mathcal{S}$ is defined as
\begin{align}\label{set}
  \mathcal{S}&\triangleq \Big\{(k_0,P_1):\ k_0\in[0:K],\ 0\leq P_1\leq \frac{
  (\alpha-1)d_{k_0}^{\beta}}{2d_{1}^{\beta}},\nonumber\\
  &\hat{r}_0P_1d_1^{\beta}
  ((\hat{r}_0+1)^{k_0}-1)+\frac{\hat{r}_0^2(\alpha-1)}{2}\sum_{k=k_0+1}^K \Gamma_k\leq P\Big\},
\end{align}
such that constraints \eqref{constraint71} and \eqref{constraint73} can be satisfied.

Closed-form
solution to problem (F.P4) is in general not possible. However,
 it can be easily shown  that problem (F.P4) is convex since $G(x)$ is concave when $x\in\left[\frac{1-\alpha}{2},
\infty\right)$ as presented in Proposition 1.
Thus, for a fixed pair $(k_0,P_1)$, problem (F.P4) will be solved later in Section \ref{vi}
   with the help of corresponding numerical solvers.

   { The following work is to find optimal values of $k_0$ and $P_1$,
   denoted as
as $(k_0^*,P_1^*)$,  which can be expressed as
\begin{align}
  (k_0^*,P_1^*)=\arg \max_{(k_0,P_1)\in \mathcal{S}} \sum_{k=1}^{k_0} G\left(\frac{d_1^{\beta}}{d_k^{\beta}}P_1\right) +f_{(F.P4)}^*(k_0,P_1),
\end{align}
where $f_{(F.P4)}^*(k_0,P_1)$ is the maximum  value of the objective function in problem (F.P4)
for a fixed pair $(k_0,P_1)$.
Specifically, in order to find $(k_0^*,P_1^*)$,
a two-dimensional exhaustive search over $k_0$ and $P_1$ should be carried out.
Since $k_0$ is an integer in $[0:K]$
as shown in \eqref{set}, the computational complexity of
this two-dimensional exhaustive search is $O((K+1)\delta)$, where $\delta$
 is the  step size  when searching $P_1^*$ (i.e.,  $\delta$ denotes the searching accuracy
 of $P_1^*$).   }

\subsubsection{Case $\alpha=1$}
In this case, based on \eqref{u(F_k)} and \eqref{F_k3},
 problem (F.P2) can be expressed as
 \begin{subequations}\label{problem3}
\begin{align}
 \textrm{(F.P5) }&\min_{\{{P}_k\}} \quad \sum_{k=1}^K\frac{1}{P_k}
 \label{obj3}\\
   \textrm{s.t. }&
  \eqref{power_constraint2} \textrm{ and } \eqref{power_inequality},\
   {P}_k\geq 0, \ k\in[1:K].\label{constraint31}
\end{align}
\end{subequations}
The following lemma  provides the closed-form expression for the optimal solution of the problem.
\begin{Lemma}\label{lemma1}
  The optimal solution for problem (F.P5) is given by
  \begin{align}\label{solution_(F.P5)}
  P_k&=\frac{1}{\hat{r}_0 \sqrt{\omega\Gamma_k}},
 \textrm{ where }
 \omega=\left(\frac{\hat{r}_0}{P}\sum_{k=1}^K\sqrt{\Gamma_k} \right)^2.
 \end{align}
\end{Lemma}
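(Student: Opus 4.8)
The plan is to solve problem (F.P5) exactly through its KKT system, using the fact that it is a convex program: first I would temporarily drop the ordering constraint \eqref{power_inequality}, solve the resulting water-filling-type problem in closed form, and then check that the obtained point already satisfies \eqref{power_inequality}, so that the relaxation was without loss.

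First I would record the structural facts that make the argument clean. The objective $\sum_{k=1}^K 1/P_k$ is strictly convex on the positive orthant and the constraints \eqref{power_constraint2}, \eqref{power_inequality}, $P_k\ge 0$ are affine, so (F.P5) is convex and its KKT conditions are necessary and sufficient; the feasible set is compact (each $P_k$ is bounded above by $P/(\hat{r}_0^2\Gamma_k)$) and the objective diverges as any $P_k\to 0$, so a minimizer exists with all $P_k>0$. Since the objective is strictly decreasing in every $P_k$, the budget constraint \eqref{power_constraint2} is active at the optimum.

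Then I would write the Lagrangian with only the active budget constraint, $L=\sum_k 1/P_k+\lambda\big(\hat{r}_0^2\sum_k\Gamma_k P_k-P\big)$, and impose stationarity $\partial L/\partial P_k=0$, i.e. $-1/P_k^2+\lambda\hat{r}_0^2\Gamma_k=0$, giving $P_k=1/(\hat{r}_0\sqrt{\lambda\Gamma_k})$ for some $\lambda>0$. Substituting into $\hat{r}_0^2\sum_k\Gamma_k P_k=P$ yields $\hat{r}_0\,\lambda^{-1/2}\sum_k\sqrt{\Gamma_k}=P$, hence $\sqrt{\lambda}=\frac{\hat{r}_0}{P}\sum_k\sqrt{\Gamma_k}$, i.e. $\lambda=\omega$ exactly as in \eqref{solution_(F.P5)}; this is the unique stationary point.

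Finally I would verify feasibility of this candidate, the only outstanding point being the discarded constraint \eqref{power_inequality}. Using $\Gamma_k=(\hat{r}_0+1)^{k-1}d_k^{\beta}$ one gets $d_k^{\beta}P_k=\frac{1}{\hat{r}_0\sqrt{\omega}}\sqrt{d_k^{\beta}/(\hat{r}_0+1)^{k-1}}$, and since $\{d_k^{\beta}\}$ is strictly decreasing (because $d_1>\cdots>d_K$) while $\{(\hat{r}_0+1)^{k-1}\}$ is increasing, the sequence $d_k^{\beta}P_k$ is strictly decreasing, so \eqref{power_inequality} holds automatically — notably without invoking the separation assumption \eqref{assumption}. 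Thus the candidate is feasible for (F.P5) and, being the unique KKT point of a convex program, it is the global optimum (uniqueness also following from strict convexity of the objective). The \emph{main obstacle} here is not analytic at all: the single thing one must not overlook is precisely this monotonicity check showing that relaxing \eqref{power_inequality} is harmless, which is what turns the elementary Lagrangian computation into a proof.
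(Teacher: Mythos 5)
Your proposal is correct and follows essentially the same route as the paper's proof in Appendix B: relax the ordering constraint \eqref{power_inequality}, solve the resulting convex problem via its KKT system to obtain $P_k=1/(\hat{r}_0\sqrt{\omega\Gamma_k})$ with $\omega$ fixed by the active power budget, and then check that $d_k^{\beta}P_k$ is decreasing in $k$ so the relaxed constraint is automatically satisfied. The extra care you take with existence, strict convexity, and the observation that \eqref{assumption} is not needed are welcome refinements but do not change the argument.
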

\begin{proof}
Please refer to Appendix \ref{proof_lemma_1}.
\end{proof}

\subsubsection{Case $\alpha>1$}
In this case, based on \eqref{u(F_k)} and \eqref{F_k3},
 problem (F.P2) can be expressed as
 \begin{subequations}\label{problem3}
\begin{align}
 \textrm{(F.P7) }&\min_{\{{P}_k\}} \quad \sum_{k=1}^K\exp\left(\frac{\alpha-1}{P_k}\right)
 \label{obj5}\\
   \textrm{s.t. }&
  \eqref{power_constraint2} \textrm{ and } \eqref{power_inequality},\
   {P}_k\geq 0, \ k\in[1:K].\label{constraint5}
\end{align}
\end{subequations}
{ The convexity of this problem can be verified through deriving the Hessian matrix of the objective function}.
 Obviously a closed-form expression  for the optimal solution of problem (F.P7) is difficult to obtain,
  however,  this problem will be solved later in Section \ref{vi}
   using corresponding numerical solvers. On the other hand,
   we can verify that, when $\alpha\rightarrow \infty$, absolute user fairness in terms of throughput can be obtained in the following Lemma.
      \begin{Lemma}\label{lemma_absolutefairness}
     When $\alpha\rightarrow \infty$,  $F_i(P_i^*)=F_j(P_j^*)$, $\forall i,j\in[1:K]$,
     where $(P_1^*,\cdots,P_K^*)$ is the optimal solution of problem (F.P7).
   \end{Lemma}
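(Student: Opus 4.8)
The plan is to establish the stronger, quantitative fact that the optimal solution $P^{(\alpha)}=(P_1^{(\alpha)},\dots,P_K^{(\alpha)})$ of (F.P7) converges componentwise, as $\alpha\to\infty$, to the constant vector with all entries equal to $P_{\max}\triangleq P/\big(\hat{r}_0^2\sum_{k=1}^{K}\Gamma_k\big)$; since $F_k$ in \eqref{F_k3} is a function of $P_k$ alone, this immediately yields $\lim_{\alpha\to\infty}F_i(P_i^{(\alpha)})=r_0\exp(-1/P_{\max})$ for every $i$, which is the asserted absolute fairness in the limit.

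First I would record two elementary facts about the feasible set $\mathcal{F}$ of (F.P7). (i) The constant vector $P_k\equiv P_{\max}$ belongs to $\mathcal{F}$: it makes the power constraint \eqref{power_constraint2} tight, it satisfies the ordering constraint \eqref{power_inequality} because $d_1^{\beta}>d_2^{\beta}>\dots>d_K^{\beta}$ under the standing assumption $d_1>\dots>d_K$ (with $\beta>0$), and it is nonnegative. (ii) Every $P\in\mathcal{F}$ satisfies $\min_k P_k\le P_{\max}$, since with $m\triangleq\min_k P_k$ constraint \eqref{power_constraint2} forces $\hat{r}_0^2 m\sum_k\Gamma_k\le\hat{r}_0^2\sum_k\Gamma_k P_k\le P$; moreover, isolating $\Gamma_i P_i$ in \eqref{power_constraint2} and bounding the remaining terms using $P_j\ge m$ gives the quantitative refinement
\[
P_i\ \le\ P_{\max}+(P_{\max}-m)\,\frac{\sum_{j\ne i}\Gamma_j}{\Gamma_i},\qquad i\in[1:K].
\]

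Next I would squeeze $P^{(\alpha)}$ between matching bounds. Because each term of a sum of positive numbers is at most the whole sum, and by (i) the optimal value of (F.P7) is at most $K\exp\big((\alpha-1)/P_{\max}\big)$, for each $i$
\[
\exp\!\Big(\tfrac{\alpha-1}{P_i^{(\alpha)}}\Big)\ \le\ \sum_{k=1}^{K}\exp\!\Big(\tfrac{\alpha-1}{P_k^{(\alpha)}}\Big)\ \le\ K\exp\!\Big(\tfrac{\alpha-1}{P_{\max}}\Big),
\]
so $P_i^{(\alpha)}\ge\big(1/P_{\max}+\ln K/(\alpha-1)\big)^{-1}\to P_{\max}$ as $\alpha\to\infty$; in particular $m^{(\alpha)}\triangleq\min_k P_k^{(\alpha)}\to P_{\max}$. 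Feeding $m^{(\alpha)}$ into the refinement in (ii), and using that $\sum_{j\ne i}\Gamma_j/\Gamma_i$ is a fixed finite constant, gives $P_i^{(\alpha)}\le P_{\max}+(P_{\max}-m^{(\alpha)})\sum_{j\ne i}\Gamma_j/\Gamma_i\to P_{\max}$. Hence $P_i^{(\alpha)}\to P_{\max}$ for every $i$, so $F_i(P_i^{(\alpha)})=r_0\exp(-1/P_i^{(\alpha)})\to r_0\exp(-1/P_{\max})$ with the same limit for all $i$, which is the claim. (The convexity --- indeed strict convexity, already noted via the Hessian --- of (F.P7) guarantees that $P^{(\alpha)}$ is well defined for each $\alpha$, so no subsequence extraction is needed.)

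I expect the main obstacle to be conceptual rather than computational: pinning down the meaning of the limit $\alpha\to\infty$ in the statement (the equality $F_i(P_i^*)=F_j(P_j^*)$ is to be read as the limiting coincidence of the optimal per-user throughputs), and checking that the two bounds above genuinely force every optimal sequence to the constant vector. At a higher level this is an epi-convergence / $\Gamma$-convergence phenomenon --- the normalized objective $\tfrac{1}{\alpha-1}\ln\sum_k\exp((\alpha-1)/P_k)$ converges to $\max_k 1/P_k$, whose unique minimizer over $\mathcal{F}$ is exactly the constant vector $P_k\equiv P_{\max}$ by (i)--(ii) --- but the elementary squeeze above avoids invoking that machinery and keeps the proof self-contained.
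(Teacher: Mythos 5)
Your proof is correct, but it follows a genuinely different route from the paper's. The paper argues by \emph{reduction to absurdity} combined with its Proposition 2: assuming two unequal optimal powers $P_i^*<P_j^*$, it constructs the perturbed pair $(P_i^*+\epsilon/\Gamma_i,\,P_j^*-\epsilon/\Gamma_j)$, checks feasibility against \eqref{power_constraint2} and \eqref{power_inequality}, and uses the Lagrange mean value theorem on $G_1(x)=\exp((\alpha-1)/x)$ to show the exchange strictly decreases the objective once $\alpha$ is large, forcing all $P_k^*$ equal. You instead sandwich the optimizer directly: feasibility of the constant vector $P_k\equiv P_{\max}$ gives the lower bound $P_i^{(\alpha)}\ge\bigl(1/P_{\max}+\ln K/(\alpha-1)\bigr)^{-1}$, and isolating $\Gamma_iP_i$ in the power constraint gives the matching upper bound, so $P_i^{(\alpha)}\to P_{\max}$ componentwise. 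Your version buys three things the paper's does not: it makes the meaning of ``when $\alpha\to\infty$'' precise as a genuine limit of optimizers (the paper's inequality $\epsilon_1G_1'(\varepsilon_1)<\epsilon_2G_1'(\varepsilon_2)$ ``as $\alpha\to\infty$'' is itself only a limiting statement, so its contradiction argument is somewhat informal about which $\alpha$ it applies to); it identifies the common limiting throughput explicitly as $r_0\exp(-1/P_{\max})$ with an $O(\ln K/\alpha)$ rate; and it does not need Proposition 2 or the perturbation feasibility checks. What the paper's exchange argument buys in return is a statement about the structure of the optimizer for large finite $\alpha$ in the same style as its other proofs (Propositions 2--3, Lemma 1), at the cost of rigor about the limit. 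One small point worth keeping in your write-up: the verification that the constant vector satisfies the ordering constraint \eqref{power_inequality} (via $d_1>\cdots>d_K$) is essential and easy to overlook, and you do include it.
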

   \begin{proof}
    Please refer to Appendix \ref{proof_lemma_afairness}.
    \end{proof}

\section{Optimal Power Allocation with Perfect CSIT}\label{iv}
In this section, we determine the optimal power allocation to maximize the
instantaneous sum rate  with $\alpha$-fairness by solving problem (R.P1).
\subsection{Problem Transformation}
By denoting $K$ variables as: $b_k\triangleq \sum_{i=k}^K \tilde{P}_i$, $k\in[1:K]$, from \eqref{ratek},
the instantaneous rate of user $k$ can be expressed as
\begin{align}\label{Rkb}
  R_k(b_k,b_{k+1})=\ln \left(\frac{1+H_kb_k}{1+H_kb_{k+1}}\right),\
  k\in[1:K],
\end{align}
 where it is defined $b_{K+1}\triangleq 0$ for the sake of brevity.

 In addition, the power constraint in \eqref{constraintR1} is obviously binding at the optimal solution of problem (R.P1),
i.e.,  $\sum_{i=1}^K\tilde{P}_i= P$ and $b_1=P$. Thus, problem (R.P1) can be reformulated
into the following optimization framework:
 \begin{subequations}\label{problem_rate2}
\begin{align}
 \textrm{(R.P2) }&\max_{\{b_2,\cdots,b_K\}} \sum_{k=1}^K u_{\alpha}(R_k(b_k,b_{k+1}))
 \label{objR2}\\
   \textrm{s.t. }&\label{constraintR2}
   b_k\geq b_{k+1},\ \forall k \in[1:K],\\
   &b_1= P,\ b_{K+1}= 0.
\end{align}
\end{subequations}

The following lemma is required to represent the KKT conditions of problem (R.P2).
\begin{Lemma}\label{lemma_KKT}
  The KKT conditions of problem (R.P2) can be transformed into the following $K$ equations:
  \begin{align}
{f}_{1,k}&(b_k,b_{k+1},b_{k+2})\triangleq\frac{R_{k+1}\left(b_{k+1},b_{k+2}\right)}
    {R_{k}\left(b_{k},b_{k+1}\right)}\nonumber\\
    &-\left(\frac{b_{k+1}+\frac{1}{H_k}}{b_{k+1}+\frac{1}{H_{k+1}}}
    \right)^{1/\alpha}=0,\nonumber\\
     &b_{k+2}< b_{k+1} < b_{k},\ \forall k\in[1:K-1],\label{lemma_bk1_eq2}
\end{align}
\end{Lemma}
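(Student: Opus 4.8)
\emph{Proof strategy.} The plan is to write the Lagrangian of problem (R.P2), observe that on the region specified in the statement all the monotonicity multipliers vanish, and then reduce the remaining stationarity conditions to the displayed equations by a direct differentiation. Attach multipliers $\lambda_k\ge 0$ to the constraints $b_k-b_{k+1}\ge 0$, $k\in[1:K]$, recalling that $b_1=P$ and $b_{K+1}=0$ are fixed, so the free variables are $b_2,\dots,b_K$. On the region assumed here, $b_{k+2}<b_{k+1}<b_k$ for every $k$, the chain $b_1>b_2>\cdots>b_{K+1}=0$ holds strictly, so every constraint $b_k\ge b_{k+1}$ is slack and complementary slackness forces $\lambda_k=0$ for all $k$; dual feasibility is then vacuous and primal feasibility is exactly that strict chain. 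Hence, on this region, the KKT system is equivalent to plain stationarity of $\Phi(\{b_k\})\triangleq\sum_{k=1}^{K}u_\alpha\!\left(R_k(b_k,b_{k+1})\right)$ in each free variable $b_j$, $j\in[2:K]$.

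The second step is the differentiation. The structural remark that makes this easy is that $b_j$ enters $\Phi$ through exactly two summands, $u_\alpha\!\left(R_{j-1}(b_{j-1},b_j)\right)$ and $u_\alpha\!\left(R_j(b_j,b_{j+1})\right)$ — no other rate term depends on $b_j$, and the end-point values $b_1,b_{K+1}$ are constants. Writing $R_k(b_k,b_{k+1})=\ln(1+H_kb_k)-\ln(1+H_kb_{k+1})$ from \eqref{Rkb} gives
\[
\frac{\partial R_{j-1}}{\partial b_j}=-\frac{H_{j-1}}{1+H_{j-1}b_j},\qquad \frac{\partial R_j}{\partial b_j}=\frac{H_j}{1+H_jb_j}.
\]
Moreover $u_\alpha'(x)=x^{-\alpha}$ in both cases $\alpha=1$ and $\alpha\neq 1$, which is a one-line check from the definition \eqref{u(F_k)} of $u_\alpha$. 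Since all $R_k>0$ on the assumed region (each $R_k>0$ is equivalent to $b_k>b_{k+1}$), the equation $\partial\Phi/\partial b_j=0$ can be divided through to yield
\[
R_j(b_j,b_{j+1})^{-\alpha}\,\frac{H_j}{1+H_jb_j}=R_{j-1}(b_{j-1},b_j)^{-\alpha}\,\frac{H_{j-1}}{1+H_{j-1}b_j}.
\]

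The last step is routine algebra: rearranging gives $\left(R_{j-1}/R_j\right)^{\alpha}=\dfrac{H_{j-1}(1+H_jb_j)}{H_j(1+H_{j-1}b_j)}$, and dividing numerator and denominator on the right by $H_{j-1}H_j$ turns it into $\dfrac{b_j+1/H_j}{b_j+1/H_{j-1}}$. Taking the $(1/\alpha)$-th power and relabelling $k=j-1$ (so $j\in[2:K]$ corresponds to $k\in[1:K-1]$) produces exactly $\dfrac{R_{k+1}(b_{k+1},b_{k+2})}{R_k(b_k,b_{k+1})}=\left(\dfrac{b_{k+1}+1/H_k}{b_{k+1}+1/H_{k+1}}\right)^{1/\alpha}$, i.e.\ $f_{1,k}=0$. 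Conversely, any $(b_2,\dots,b_K)$ solving these equations with $b_{k+2}<b_{k+1}<b_k$ satisfies stationarity and, with all $\lambda_k=0$, the whole KKT system, so the equivalence is exact.

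I do not anticipate a genuine obstacle: the computation is elementary once one notices that $b_j$ couples only two rate terms and that $u_\alpha'(x)=x^{-\alpha}$ holds uniformly in $\alpha$. The only point needing a little care is the preliminary restriction to the open region $b_{k+2}<b_{k+1}<b_k$, i.e.\ that the KKT point of interest has no two consecutive $b_k$ equal: if $b_k=b_{k+1}$ then $R_k=0$ and $u_\alpha(R_k)$ is either $-\infty$ (for $\alpha\ge 1$) or degenerates that user's contribution (for $\alpha<1$), so such points are not the solutions being sought. I would dispose of this with the same positivity-of-rates observation already used in passing from (R.P1) to (R.P2), after which the transformation to the $f_{1,k}$ equations is lossless.
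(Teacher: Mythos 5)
Your proposal is correct and follows essentially the same route as the paper's proof in Appendix D: form the Lagrangian, observe that strictness of $b_{k+2}<b_{k+1}<b_k$ (forced by the fact that $R_k=0$ is incompatible with the stationarity terms $(R_k)^{-\alpha}$) kills the multipliers via complementary slackness, and then rearrange the two-term stationarity condition $\frac{(R_{k}(b_k,b_{k+1}))^{-\alpha}}{b_{k+1}+1/H_k}=\frac{(R_{k+1}(b_{k+1},b_{k+2}))^{-\alpha}}{b_{k+1}+1/H_{k+1}}$ into $f_{1,k}=0$. The only cosmetic difference is that the paper deduces the strict ordering from the KKT system itself rather than restricting to that region a priori, and your explicit verification that $u_\alpha'(x)=x^{-\alpha}$ uniformly in $\alpha$ makes the differentiation step slightly more transparent than the paper's.
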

\begin{proof}
  Please refer to Appendix \ref{proof_lemma_kkt}.
%
\end{proof}
\begin{Remark}\label{remark_afairness}
  From Lemma \ref{lemma_KKT}, it  can be observed that
 absolute user fairness in terms of instantaneous rate can be obtained when $\alpha\rightarrow \infty$.
 Specifically, $R_{k+1}=R_k$ holds in \eqref{lemma_bk1_eq2},
  $\forall k\in[1:K-1]$, as long as $\alpha\rightarrow \infty$.
\end{Remark}

To obtain the solution through the KKT conditions of problem (R.P2),
 the following theorem is presented. 
\begin{theorem}\label{theorem_unique}
 There is only a unique solution  for the $K-1$ equations in \eqref{lemma_bk1_eq2},
 denoted as $(\hat{b}_2,\cdots,\hat{b}_K)$.
\end{theorem}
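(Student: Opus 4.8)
The plan is to exploit the ``tridiagonal'' structure of the system in \eqref{lemma_bk1_eq2}: with $b_1=P$ and $b_{K+1}=0$ fixed, the $k$-th equation $f_{1,k}=0$ involves only the three consecutive unknowns $b_k,b_{k+1},b_{k+2}$. First I would rewrite it as $R_k(b_k,b_{k+1})=R_{k+1}(b_{k+1},b_{k+2})\,\rho_k(b_{k+1})$, where $\rho_k(x)\triangleq\big((x+1/H_{k+1})/(x+1/H_k)\big)^{1/\alpha}$. Since $H_k\le H_{k+1}$ by the assumed ordering $H_1\le\cdots\le H_K$, the function $\rho_k$ maps $[0,\infty)$ into $(0,1]$, is nondecreasing (its logarithmic derivative equals $\tfrac1\alpha\big(\tfrac1{x+1/H_{k+1}}-\tfrac1{x+1/H_k}\big)\ge0$), and tends to $1$ as $x\to\infty$. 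Because $b_k\mapsto R_k(b_k,b_{k+1})=\ln\tfrac{1+H_kb_k}{1+H_kb_{k+1}}$ is a continuous strictly increasing bijection of $(b_{k+1},\infty)$ onto $(0,\infty)$, while the right-hand side above is a fixed positive number as soon as $b_{k+2}<b_{k+1}$, the $k$-th equation determines $b_k$ as a single-valued continuous function of $(b_{k+1},b_{k+2})$, automatically satisfying $b_k>b_{k+1}$.

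Next I would fix $b_{K+1}=0$, treat $s\triangleq b_K\in(0,\infty)$ as a free shooting parameter, and recursively define $b_{K-1}(s),\dots,b_1(s)$ by applying this inversion to the equations $k=K-1,K-2,\dots,1$ in turn; put $B(s)\triangleq b_1(s)$. By construction the strict chain $b_1(s)>b_2(s)>\cdots>b_K(s)=s>0=b_{K+1}$ holds for every $s>0$, so the recursion never leaves the feasible region of \eqref{lemma_bk1_eq2}, and $(b_2(\hat s),\dots,b_K(\hat s))$ solves \eqref{lemma_bk1_eq2} exactly when $B(\hat s)=P$. The heart of the argument is the claim that, for every $k$, the map $b_k(\cdot)$ is continuous and strictly increasing on $(0,\infty)$ and the auxiliary quantity $\tilde R_k(s)\triangleq R_k(b_k(s),b_{k+1}(s))$ is positive, continuous and nondecreasing. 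I would prove this by downward induction on $k$: the base case $k=K$ is immediate ($b_K(s)=s$, $\tilde R_K(s)=\ln(1+H_Ks)$), and in the inductive step the $k$-th equation gives $\tilde R_k(s)=\tilde R_{k+1}(s)\,\rho_k(b_{k+1}(s))$, which is positive and nondecreasing since $\rho_k$ is nondecreasing, $b_{k+1}(s)$ is (inductively) increasing, and $\tilde R_{k+1}$ is positive and nondecreasing; then $b_k(s)=\tfrac1{H_k}\big[(1+H_kb_{k+1}(s))\,e^{\tilde R_k(s)}-1\big]$ is strictly increasing, being the rescaling of a product of the strictly increasing positive function $1+H_kb_{k+1}(s)$ and the nondecreasing positive function $e^{\tilde R_k(s)}$.

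For the endpoint behaviour, iterating $\tilde R_k(s)=\tilde R_{k+1}(s)\rho_k(b_{k+1}(s))$ gives $\tilde R_k(s)=\ln(1+H_Ks)\prod_{j=k}^{K-1}\rho_j(b_{j+1}(s))$, and $\rho_j(0)\le\rho_j(b_{j+1}(s))\le1$ yields the bound $c_k\ln(1+H_Ks)\le\tilde R_k(s)\le\ln(1+H_Ks)$ with $c_k\triangleq\prod_{j=k}^{K-1}(H_j/H_{j+1})^{1/\alpha}\in(0,1]$. Hence $\tilde R_k(s)\to0$ as $s\to0^+$, and a downward induction using the formula for $b_k(s)$ then gives $b_k(s)\to0$, so $B(0^+)=0$; likewise $\tilde R_k(s)\ge c_k\ln(1+H_Ks)\to\infty$ as $s\to\infty$, so $b_k(s)\ge\tfrac1{H_k}(e^{\tilde R_k(s)}-1)\to\infty$ and $B(s)\to\infty$. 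Thus $B$ is a continuous strictly increasing surjection of $(0,\infty)$ onto $(0,\infty)$, so there is a unique $\hat s$ with $B(\hat s)=P$, and $(\hat b_2,\dots,\hat b_K)=(b_2(\hat s),\dots,b_K(\hat s))$ solves \eqref{lemma_bk1_eq2}. Uniqueness then follows because any solution must have $0=b_{K+1}<b_K<b_{K-1}$, hence $b_K>0$, and the single-valuedness of the inversion forces $b_j=b_j(b_K)$ for all $j<K$ and therefore $P=b_1=B(b_K)$, i.e.\ $b_K=\hat s$.

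The main obstacle is the monotonicity claim in the second paragraph. The inversion $(b_{k+1},b_{k+2})\mapsto b_k$ is increasing in $b_{k+1}$ but \emph{decreasing} in $b_{k+2}$, so one cannot simply argue that $b_k(s)$ increases because $b_{k+1}(s)$ and $b_{k+2}(s)$ do; carrying the monotone auxiliary variable $\tilde R_k(s)$ through the induction, and using the $k$-th equation in the form $R_k=\tilde R_{k+1}\rho_k(b_{k+1})$ so that $b_{k+2}$ enters only through the already-controlled $\tilde R_{k+1}$, is what makes the induction close. The ordering $H_1\le\cdots\le H_K$ is precisely what makes each $\rho_k$ nondecreasing and bounded above by $1$ --- exactly the properties used both in the monotonicity step and in the two limit computations.
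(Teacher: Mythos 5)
Your argument is correct, and at its core it is the same mechanism the paper uses: the monotonicity of $\rho_k(x)=\bigl((x+1/H_{k+1})/(x+1/H_k)\bigr)^{1/\alpha}$ (the paper's $f_{2,k}$) lets an increase in $b_K$ propagate upward through the chain of equations, increasing every $b_k$ \emph{and} every rate $R_k$ simultaneously, until it collides with the pinned boundary value $b_1=P$. The paper packages this as a pure contradiction between two hypothetical solutions, tracking the pair of inequalities $\hat{\hat b}_k>\hat b_k$ and $R_k(\hat{\hat b}_k,\hat{\hat b}_{k+1})>R_k(\hat b_k,\hat b_{k+1})$ up from $k=K-1$ to $k=1$; your auxiliary quantity $\tilde R_k(s)$ plays exactly the role of the paper's second tracked inequality, and your observation that the inversion is decreasing in $b_{k+2}$ is precisely why both proofs must carry the rate along rather than the $b$'s alone. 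What your constructive shooting formulation buys beyond the paper's version is existence: by showing $B(s)=b_1(s)$ is a continuous, strictly increasing surjection of $(0,\infty)$ onto $(0,\infty)$, you get a unique $\hat s$ with $B(\hat s)=P$, whereas the paper's proof presupposes that a KKT point exists and only rules out a second one. Given that the theorem is later invoked (in the optimality argument for Algorithm II) to assert that the limit point of the iterates is \emph{the} solution, having existence come for free is a genuine, if modest, strengthening.
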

\begin{proof}
Please refer to Appendix \ref{proof_theorem_unique}.
\end{proof}
\begin{Remark}
  Theorem \ref{theorem_unique} shows that the KKT conditions of problem (R.P2) are sufficient to
  determine  the optimal solution, i.e., $(\hat{b}_2,\cdots,\hat{b}_K)$ is the optimal solution of
  problem (R.P2). Thus, the conventional interior point algorithm can be utilized
  to solve problem (R.P2). Alternatively, a simple algorithm
  can be developed to solve the $K-1$ equations in \eqref{lemma_bk1_eq2}, as provided in the next subsection.
\end{Remark}

\subsection{Proposed Algorithm}
{  In this subsection, a simple algorithm is developed to solve $K-1$ equations in \eqref{lemma_bk1_eq2}, which yields the optimal solution of the original problem in \eqref{problem_rate2}.}
\begin{Lemma}\label{lemma_bk1}
  For a fixed pair $(b_k,b_{k+2})$, $k\in[1:K-1]$, only a unique $b_{k+1}$ satisfies
   the $k$-th equation in \eqref{lemma_bk1_eq2},  which is the unique root of the following function:
     \begin{align}\label{lemma_bk1_eq}
    \tilde{f}_{1,k}(x)\triangleq\frac{\ln\left(\frac{1+H_{k+1}x}{1+H_{k+1}b_{k+2}}\right)}
    {\ln\left(\frac{1+H_kb_k}{1+H_kx} \right)}-\left(\frac{x+\frac{1}{H_k}}{x+\frac{1}{H_{k+1}}}
    \right)^{1/\alpha},\ b_{k+2}< x < b_{k},
  \end{align}
  where function $\tilde{f}_{1,k}$ is defined
as the same as $f_{1,k}$ in \eqref{lemma_bk1_eq2},
except that $\tilde{f}_{1,k}$ is a single-variable function whereas
${f}_{1,k}$ is a multi-variable function. 
  \end{Lemma}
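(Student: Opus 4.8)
The plan is to show that $\tilde{f}_{1,k}$ is continuous and strictly increasing on the open interval $(b_{k+2},b_k)$ and changes sign from negative to positive across it; the intermediate value theorem together with strict monotonicity then yields existence and uniqueness of a root, which by the construction of $\tilde f_{1,k}$ from \eqref{lemma_bk1_eq2} is the unique $b_{k+1}$ solving the $k$-th equation there.

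First I would record the structural facts used throughout. For $b_{k+2}<x<b_k$ the argument $\frac{1+H_kb_k}{1+H_kx}$ of the denominator logarithm exceeds $1$, so the denominator is strictly positive and $\tilde{f}_{1,k}$ is well defined and continuous on this interval. The recalled user ordering $H_1\le H_2\le\cdots\le H_K$ gives $\frac{1}{H_k}\ge\frac{1}{H_{k+1}}$, hence $\frac{x+1/H_k}{x+1/H_{k+1}}\ge 1$ for every $x>0$.

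Next I would examine the two endpoints. As $x\to b_{k+2}^{+}$, the numerator logarithm $\ln\left(\frac{1+H_{k+1}x}{1+H_{k+1}b_{k+2}}\right)$ tends to $0^{+}$ while the denominator tends to the positive constant $\ln\left(\frac{1+H_kb_k}{1+H_kb_{k+2}}\right)$; thus the first term of $\tilde{f}_{1,k}$ vanishes, and since the second term is at least $1$ we get $\tilde{f}_{1,k}(x)\to-\left(\frac{b_{k+2}+1/H_k}{b_{k+2}+1/H_{k+1}}\right)^{1/\alpha}<0$. As $x\to b_k^{-}$, the denominator logarithm tends to $0^{+}$ while the numerator tends to the positive constant $\ln\left(\frac{1+H_{k+1}b_k}{1+H_{k+1}b_{k+2}}\right)$, so the first term diverges to $+\infty$ while the second term stays bounded; hence $\tilde{f}_{1,k}(x)\to+\infty$. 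Continuity then forces a root in $(b_{k+2},b_k)$.

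Finally, for uniqueness I would prove strict monotonicity term by term. The numerator $\ln\left(\frac{1+H_{k+1}x}{1+H_{k+1}b_{k+2}}\right)$ is positive and strictly increasing in $x$ and the denominator $\ln\left(\frac{1+H_kb_k}{1+H_kx}\right)$ is positive and strictly decreasing in $x$, so the first term of $\tilde{f}_{1,k}$ is strictly increasing. Writing $\frac{x+1/H_k}{x+1/H_{k+1}}=1+\frac{1/H_k-1/H_{k+1}}{x+1/H_{k+1}}$ with non-negative numerator shows this ratio is non-increasing in $x$, and since $1/\alpha>0$ so is its $(1/\alpha)$-th power; hence $-\left(\frac{x+1/H_k}{x+1/H_{k+1}}\right)^{1/\alpha}$ is non-decreasing. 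Adding a strictly increasing function to a non-decreasing one, $\tilde{f}_{1,k}$ is strictly increasing on $(b_{k+2},b_k)$, so the root is unique. The degenerate case $b_{k+2}=b_k$ makes the interval empty and the claim vacuous, while the case $H_k=H_{k+1}$ is still covered since the first term alone is strictly increasing. I expect the only place needing care to be the endpoint bookkeeping — keeping the denominator bounded away from $0$ in the interior, matching each vanishing logarithm to the correct endpoint, and confirming the sign of the limit at $b_{k+2}$ via the channel ordering — but none of this is genuinely hard.
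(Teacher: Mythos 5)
Your proposal is correct and follows essentially the same route as the paper's proof: establish that $\tilde{f}_{1,k}$ is monotonically increasing on $(b_{k+2},b_k)$ (via the rewriting $\frac{x+1/H_k}{x+1/H_{k+1}}=1+\frac{1/H_k-1/H_{k+1}}{x+1/H_{k+1}}$), check that $\tilde{f}_{1,k}<0$ near $b_{k+2}$ and $\tilde{f}_{1,k}\to+\infty$ near $b_k$, and conclude uniqueness of the root. Your version simply spells out the endpoint limits and the term-by-term monotonicity (including the degenerate cases $H_k=H_{k+1}$ and $b_{k+2}=b_k$) in more detail than the paper does.
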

  \begin{proof}
      We will show that
  function $\tilde{f}_{1,k}(x)$ is monotonically increasing when $b_{k+2}< x < b_{k}$,
  and $\tilde{f}_{1,k}(x)=0$ has only a unique root over $(b_{k+1},b_k)$.
  Specifically,  $\frac{x+\frac{1}{H_k}}{x+\frac{1}{H_{k+1}}}=1+\frac{\frac{1}{H_k}-\frac{1}{H_{k+1}}}{x+\frac{1}{H_{k+1}}}$,
which decreases with $x$ for $x>0$. Recalling \eqref{lemma_bk1_eq}, $\tilde{f}_{1,k}(x)$
 is obviously a monotonically increasing
function when $b_{k+2}\leq x \leq b_{k}$.
Furthermore, $\tilde{f}_{1,k}(x)<0$ as $x\rightarrow b_{k+2}$; $\tilde{f}_{1,k}(x)\rightarrow +\infty$ as $x\rightarrow b_{k}$.
 Therefore, equation $\tilde{f}_{1,k}(x)=0$  has only a unique root, which is denoted as $b_{k+1}^*$.
   Based on the definitions of $\tilde{f}_{1,k}$ and ${f}_{1,k}$,
    $b_{k+1}^*$ is the unique value that satisfies
   the $k$-th equation in \eqref{lemma_bk1_eq2} for a fixed pair $(b_k,b_{k+2})$.
  \end{proof}

\begin{Remark}
As discussed in the proof of Lemma \ref{lemma_bk1}, $\tilde{f}_{1,k}(x)$ is a monotonically increasing function,
hence a simple bisection method can be { utilized} to determine
 the root of equation \eqref{lemma_bk1_eq}, which
is summarized in Algorithm I. 
\end{Remark}


 \begin{table}[tbp]
\hrule
\vspace{1mm}
\noindent {\bf Algorithm I}: Root Search for Fixed $(b_k,b_{k+2})$ in \eqref{lemma_bk1_eq}
\vspace{0mm}
\hrule
\vspace{3mm}
   1: Initialize  $b_{lb}=b_{k+2}$, $b_{ub}=b_k$;

  2: \textbf{while} ($|\tilde{f}_{1,k}(b_{k+1})|>\epsilon_1$) \textbf{do}  

  3: \hspace{0.5cm} Set $b_{k+1}=(b_{lb}+b_{ub})/2$, and calculate $\tilde{f}_{1,k}(b_{k+1})$;

  4: \hspace{0.5cm}  \textbf{if} $\tilde{f}_{1,k}(b_{k+1})>\epsilon_1$ \textbf{then}
   $b_{ub}=b_{k+1}$;

  5: \hspace{0.5cm}  \textbf{else}
  $b_{lb}=b_{k+1}$;

  6: \textbf{until} $|\tilde{f}_{1,k}(b_{k+1})|<\epsilon_1$;

  \vspace{3mm}
\hrule
\end{table}

Motivated by Lemma \ref{lemma_bk1}, a simple AO algorithm is summarized in Algorithm II,
where $b_k^{(t)}$ denotes the value of $b_k$ in the $t$-th iteration.
The basic idea is to alternately solve the $k$-th equation in
\eqref{lemma_bk1_eq2} by fixing the other corresponding variables. Specifically,  in each iteration $t$, the root of the $k$-th equation in
\eqref{lemma_bk1_eq2} is determined  using Algorithm I for a fixed pair $(b_{k}^{(t)},b_{k+2}^{(t-1)})$, $\forall k\in[1:K-1]$. By denoting such a root as $b_{k+1}^*$, the value of $b_{k+1}$
in iteration $t$ is updated
as $b_{k+1}^{(t)}=b_{k+1}^*$, until the required accuracy is achieved. Note that
$\textrm{Norm}\left[\mathbf{f}_1^{(t)}\right]\leq \epsilon_2$
 is utilized as the { stopping criterion}, where
 \begin{align}\label{definition_bf_f1}
 \mathbf{f}_1^{(t)}\triangleq
 \left(f_{1,1}\left(b_1^{(t)},b_2^{(t)},b_3^{(t)}\right),\cdots,f_{1,K-1}
 \left(b_{K-1}^{(t)},b_{K}^{(t)},b_{K+1}^{(t)}\right)\right),
 \end{align}
 and $\textrm{Norm}\left[\cdot\right]$ is the Euclidean distance of a vector.
 In addition, the KKT conditions can be obviously satisfied as
$\textrm{Norm}\left[\mathbf{f}_1^{(t)}\right]\rightarrow 0$, as provided in Lemma \ref{lemma_KKT}.

Next, we analyze the  convergence and optimality of the proposed algorithm.
{ To verify the convergence of the algorithm, the following theorem is required}.

\begin{theorem}\label{theorem_increase}
  For Algorithm II, ${b}_{k}^{(t)}$ is  monotonically increasing with $t$, $\forall k\in[2:K]$.
\end{theorem}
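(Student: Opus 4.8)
\textbf{Proof proposal for Theorem \ref{theorem_increase}.}

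The plan is to prove the claim by induction on the iteration index $t$, exploiting the monotonicity of $\tilde{f}_{1,k}$ established in Lemma \ref{lemma_bk1} together with the sequential (Gauss--Seidel--style) structure of Algorithm II. First I would fix the ordering in which the updates are performed within one iteration; from the description, in iteration $t$ the root of the $k$-th equation is computed for the pair $(b_k^{(t)}, b_{k+2}^{(t-1)})$, so the relevant ``already updated'' neighbour is $b_k^{(t)}$ (to the left) and the ``not yet updated'' neighbour is $b_{k+2}^{(t-1)}$ (to the right), with $b_1^{(t)}=P$ fixed for all $t$ and $b_{K+1}^{(t)}=0$. The base case is to show $b_k^{(1)}\ge b_k^{(0)}$ for all $k\in[2:K]$, which should follow from the choice of initialization (or can be arranged by taking the initial point suitably, e.g. $b_k^{(0)}$ as small as feasibility permits); I would state this explicitly as the starting hypothesis.

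The inductive step is the heart of the argument. Assume $b_k^{(t)}\ge b_k^{(t-1)}$ for all $k$. I would then show, by a nested induction on $k$ running from $2$ up to $K$, that $b_k^{(t+1)}\ge b_k^{(t)}$. Consider the update of $b_{k+1}$ in iteration $t+1$: it is the unique root of $\tilde f_{1,k}(\cdot)$ with parameters $(b_k^{(t+1)}, b_{k+2}^{(t)})$, whereas $b_{k+1}^{(t)}$ is the root of the same type of equation with parameters $(b_k^{(t)}, b_{k+2}^{(t-1)})$. By the inner induction hypothesis $b_k^{(t+1)}\ge b_k^{(t)}$, and by the outer induction hypothesis $b_{k+2}^{(t)}\ge b_{k+2}^{(t-1)}$. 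The key monotonicity facts to extract are: (i) $\tilde f_{1,k}(x)$, viewed with its parameters, is \emph{decreasing} in $b_k$ and \emph{decreasing} in $b_{k+2}$ — increasing $b_k$ enlarges the denominator term $\ln\!\left(\tfrac{1+H_kb_k}{1+H_kx}\right)$ and hence decreases the first ratio, while increasing $b_{k+2}$ shrinks the numerator $\ln\!\left(\tfrac{1+H_{k+1}x}{1+H_{k+1}b_{k+2}}\right)$; both push $\tilde f_{1,k}$ down; and (ii) $\tilde f_{1,k}(x)$ is increasing in $x$ (Lemma \ref{lemma_bk1}). From (i), raising the parameters from $(b_k^{(t)},b_{k+2}^{(t-1)})$ to $(b_k^{(t+1)},b_{k+2}^{(t)})$ shifts the function pointwise downward, so its root moves to the right; i.e. $b_{k+1}^{(t+1)}\ge b_{k+1}^{(t)}$. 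This closes the inner induction (with base case $b_2^{(t+1)}\ge b_2^{(t)}$ obtained from $b_1=P$ fixed and $b_3^{(t)}\ge b_3^{(t-1)}$) and hence the outer induction. I would also remark that, combined with the boundedness $b_k^{(t)}\le b_1=P$, monotone convergence of each sequence follows, which is the corollary used to justify termination of the algorithm.

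The main obstacle I anticipate is verifying the parameter-monotonicity claims (i) cleanly, in particular the dependence on $b_k$: one must confirm that the root of $\tilde f_{1,k}$ is well defined and depends monotonically on $b_k$ even though $b_k$ also appears in the admissible interval $(b_{k+2},b_k)$ for $x$; the argument needs the fact that $\tilde f_{1,k}(x)\to+\infty$ as $x\to b_k$ so that enlarging $b_k$ genuinely strictly increases the upper endpoint and relocates the root rightward, rather than creating a degenerate interval. A secondary subtlety is bookkeeping the interleaving of the two iteration superscripts ($t$ versus $t-1$) consistently across the chain $b_k^{(t+1)},b_{k+1}^{(t+1)},b_{k+2}^{(t)}$; I would handle this by setting up the double induction carefully at the outset so the dependencies line up, and by treating the boundary indices $k=1$ (fixed at $P$) and $k=K$ (neighbour $b_{K+1}=0$ fixed) as the base cases of the inner induction. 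Everything else reduces to the elementary monotonicity of logarithms and of the ratio $\tfrac{x+1/H_k}{x+1/H_{k+1}}$ already noted in the proof of Lemma \ref{lemma_bk1}.
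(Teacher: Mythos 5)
Your proposal is correct, and it follows the same double-induction skeleton as the paper (outer induction on $t$, inner sweep over $k$ with $b_1=P$ as the anchor), but the mechanism of the inductive step is genuinely different. The paper rewrites the $k$-th KKT equation as $\ln(\tilde{c}_k^{(t)})=f_{2,k}(b_{k+1}^{(t)})\ln(\tilde{\tilde{c}}_{k+1}^{(t)})$ with $f_{2,k}(x)=\bigl(\tfrac{x+1/H_{k+1}}{x+1/H_k}\bigr)^{1/\alpha}$, splits into the two subcases $\tilde{\tilde{c}}_{k+1}^{(t_0+1)}\gtrless\tilde{\tilde{c}}_{k+1}^{(t_0)}$, and handles the second by \emph{reduction to absurdity}, with separate treatments for $k=1$, $k\in[2:K-2]$ and $k=K-1$. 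You instead prove directly that $\tilde f_{1,k}(x)$ is pointwise decreasing in both parameters $b_k$ and $b_{k+2}$ and increasing in $x$, so that raising the parameters pushes the unique root to the right; this handles all $k$ uniformly, avoids the case split and the contradiction, and is arguably cleaner. The one point you flag but leave open should be made explicit: to compare the two roots you must check that the old root $b_{k+1}^{(t)}$ lies inside the \emph{new} admissible interval $(b_{k+2}^{(t)},b_k^{(t+1)})$, i.e.\ that $b_{k+1}^{(t)}>b_{k+2}^{(t)}$; this follows from the within-iteration update order of Algorithm II, since $b_{k+2}^{(t)}$ is computed as a root in an interval whose upper endpoint is $b_{k+1}^{(t)}$. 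The base case is immediate from the initialization $b_k^{(0)}=0$ and the fact that each computed root is strictly positive. With those two small items filled in, your argument is a complete and valid alternative proof.
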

\begin{proof}
Please refer to Appendix \ref{proof_theorem_increase}.
\end{proof}

\begin{Lemma}\label{remark_convergence}
  The proposed AO algorithm in Algorithm II converges.
\end{Lemma}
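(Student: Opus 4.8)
The plan is to combine monotonicity with boundedness to invoke the monotone convergence theorem, and then argue that each coordinate sequence, together with the continuity established in Algorithm I / Lemma \ref{lemma_bk1}, has a limit that is consistent across iterations. First I would note that Theorem \ref{theorem_increase} already supplies half of what is needed: for every $k\in[2:K]$, the sequence $\{b_k^{(t)}\}_t$ is monotonically increasing in $t$. To conclude convergence it therefore suffices to exhibit a uniform upper bound for each such sequence. The natural bound is the ordering constraint itself: at every iteration the updates produced by Algorithm I keep $b_{k+2}^{(t-1)} < b_{k+1}^{(t)} < b_k^{(t)}$ (this is exactly the interval $(b_{k+2},b_k)$ on which the root of $\tilde f_{1,k}$ is searched in Lemma \ref{lemma_bk1}), so by a downward induction on $k$ starting from $b_1^{(t)}=P$ we get $0 \le b_k^{(t)} \le P$ for all $k$ and all $t$. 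Hence each $\{b_k^{(t)}\}_t$ is a monotone increasing sequence bounded above by $P$, and therefore converges to some limit $b_k^{\star}\in(0,P]$.

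Next I would verify that the limit vector $(b_2^{\star},\dots,b_K^{\star})$ is a fixed point of the iteration, i.e.\ that $\mathrm{Norm}[\mathbf{f}_1^{(t)}]\to 0$ along the sequence. This follows from continuity: each component $f_{1,k}(b_k,b_{k+1},b_{k+2})$ in \eqref{definition_bf_f1} is a continuous function of its arguments on the region $b_{k+2}<b_{k+1}<b_k$, and in iteration $t$ the update rule forces $f_{1,k}(b_k^{(t)},b_{k+1}^{(t)},b_{k+2}^{(t-1)})=0$ exactly (up to the tolerance $\epsilon_1$ of Algorithm I; in the idealized analysis take $\epsilon_1\to 0$). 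Letting $t\to\infty$ and using $b_k^{(t)}\to b_k^{\star}$, $b_{k+2}^{(t-1)}\to b_{k+2}^{\star}$, continuity gives $f_{1,k}(b_k^{\star},b_{k+1}^{\star},b_{k+2}^{\star})=0$ for every $k\in[1:K-1]$. Thus the limit satisfies the full system \eqref{lemma_bk1_eq2}, and by Theorem \ref{theorem_unique} it must coincide with the unique solution $(\hat b_2,\dots,\hat b_K)$, which by the Remark following that theorem is the optimal solution of problem (R.P2). In particular the algorithm not only converges but converges to the optimum.

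One point that needs a little care, and which I expect to be the main obstacle, is ruling out degenerate limits where the strict inequalities collapse, i.e.\ $b_{k+1}^{\star}=b_k^{\star}$ or $b_{k+1}^{\star}=b_{k+2}^{\star}$ for some $k$; at such a point $f_{1,k}$ need not be continuous (the ratio $R_{k+1}/R_k$ in \eqref{lemma_bk1_eq2} can become $0/0$ or involve a vanishing denominator), so the limiting argument above could fail. To handle this I would use the boundary behaviour of $\tilde f_{1,k}$ recorded in the proof of Lemma \ref{lemma_bk1}: $\tilde f_{1,k}(x)<0$ as $x\to b_{k+2}$ and $\tilde f_{1,k}(x)\to+\infty$ as $x\to b_k$, so the root $b_{k+1}^{(t)}$ returned in each step is bounded strictly away from both endpoints by an amount depending continuously on $(b_k^{(t)},b_{k+2}^{(t-1)})$; passing to the limit keeps $b_{k+2}^{\star}<b_{k+1}^{\star}<b_k^{\star}$ strictly, so continuity of $f_{1,k}$ at the limit is recovered and no degeneracy occurs. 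Assembling these pieces — monotonicity from Theorem \ref{theorem_increase}, the uniform bound $b_k^{(t)}\le P$, the monotone convergence theorem, and the continuity/fixed-point argument backed by Theorem \ref{theorem_unique} — completes the proof that Algorithm II converges (to the optimal solution).
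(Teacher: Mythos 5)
Your proposal is correct and its core argument is exactly the paper's: Theorem \ref{theorem_increase} gives monotonicity, the ordering maintained by Algorithm I gives the uniform bound $b_k^{(t)}\leq b_1^{(t)}=P$, and the monotone convergence theorem yields existence of the limits, which is all the paper means by ``converges'' here. Your additional fixed-point, continuity, and non-degeneracy arguments showing the limit satisfies \eqref{lemma_bk1_eq2} and equals the unique KKT solution are sound but prove a stronger claim that the paper defers to Lemma \ref{remark_optimal}.
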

\begin{proof}
  From Theorem \ref{theorem_increase}, it can be seen  that $b_{k}^{(t)}$ increases with $t$ and its
  upper bound can be defined by $b_1^{(t)}=P$.   Therefore, $\lim_{t\rightarrow \infty}b_{k}^{(t)}$ exists,
  $\forall k\in[2:K]$, and
  the proposed algorithm in  Algorithm II converges.
\end{proof}


 \begin{table}[tbp]
\hrule
\vspace{1mm}
\noindent {\bf Algorithm II}: Proposed Alternate  Algorithm for Problem (R.P2)
\vspace{0mm}
\hrule

\begin{enumerate}
   \item Initialize $t=1$, $b_k^{(0)}=0$, $\forall k\in[2:K]$;
  \item {The $t$-th iteration:}

  Set $k=1$, $b_1^{(t)}=P$, and $b_{K+1}^{(t-1)}=0$;

  \hspace{-0.3cm}\textbf{Repeat} 
 \begin{enumerate} 

   \item Fix $(b_{k}^{(t)},b_{k+2}^{(t-1)})$,
   then find the root of the $k$-th equation in \eqref{lemma_bk1_eq2}, i.e., $b_{k+1}^*$, using Algorithm I.
   \item Set $b^{(t)}_{k+1}=b_{k+1}^*$;
     \item Update $k=k+1$.

    \hspace{-0.7cm}\textbf{Until} $k=K-1$;
   \end{enumerate}

   \item Update $t=t+1$ and repeat Step 2) until $\textrm{Norm}\left[\mathbf{f}_1^{(t)}\right]\leq \epsilon_2$;
\end{enumerate}
\hrule
\end{table}

{  To validate the optimality of Algorithm II, the following lemma is provided.}
\begin{Lemma}\label{remark_optimal}
  The proposed algorithm achieves the optimal solution for problem (R.P2).
\end{Lemma}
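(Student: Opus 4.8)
The plan is to show that the sequence produced by Algorithm II converges to the unique KKT point of problem (R.P2), which is then optimal. By Theorem \ref{theorem_increase} each sequence $\{b_k^{(t)}\}_t$ is monotonically increasing and, being bounded above by $b_1^{(t)}=P$, convergent; hence by Lemma \ref{remark_convergence} the limits $b_k^{(\infty)}\triangleq\lim_{t\to\infty}b_k^{(t)}$ exist for all $k\in[2:K]$, and we set $b_1^{(\infty)}\triangleq P$, $b_{K+1}^{(\infty)}\triangleq 0$. The target is to prove that $(b_2^{(\infty)},\cdots,b_K^{(\infty)})$ solves the $K-1$ equations in \eqref{lemma_bk1_eq2}; Theorem \ref{theorem_unique} then identifies it with the unique solution $(\hat b_2,\cdots,\hat b_K)$, and by Lemma \ref{lemma_KKT} together with the sufficiency of the KKT conditions noted in the Remark following Theorem \ref{theorem_unique}, this point is the optimal solution of problem (R.P2).

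Next I would pass to the limit in the update rule. By construction of Algorithm II (Step 2) and Lemma \ref{lemma_bk1}, for every iteration $t$ and every $k\in[1:K-1]$ the updated value $b_{k+1}^{(t)}$ is the root of $\tilde f_{1,k}$ on the interval with endpoints $b_{k+2}^{(t-1)}$ and $b_k^{(t)}$, i.e. $f_{1,k}(b_k^{(t)},b_{k+1}^{(t)},b_{k+2}^{(t-1)})=0$. Each $f_{1,k}$ in \eqref{lemma_bk1_eq2} is continuous in its three arguments on the open region $b_{k+2}<b_{k+1}<b_k$; since $b_k^{(t)}\to b_k^{(\infty)}$, $b_{k+1}^{(t)}\to b_{k+1}^{(\infty)}$ and $b_{k+2}^{(t-1)}\to b_{k+2}^{(\infty)}$, letting $t\to\infty$ gives $f_{1,k}(b_k^{(\infty)},b_{k+1}^{(\infty)},b_{k+2}^{(\infty)})=0$ for all $k\in[1:K-1]$ --- provided the limit point lies in the interior, i.e. the strict ordering $b_{k+2}^{(\infty)}<b_{k+1}^{(\infty)}<b_k^{(\infty)}$ holds.

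The strict ordering at the limit is the main obstacle, since a monotone limit can collapse a strict inequality $b_{k+1}^{(t)}<b_k^{(t)}$ into an equality. I would rule out the degenerate cases using the endpoint behaviour established in the proof of Lemma \ref{lemma_bk1}: $\tilde f_{1,k}(x)<0$ as $x\to b_{k+2}$ and $\tilde f_{1,k}(x)\to+\infty$ as $x\to b_k$. Working downward from $k=K-1$ (where $b_{K+1}^{(\infty)}=0$), if some $b_{k+1}^{(\infty)}$ equalled the corresponding endpoint $b_{k+2}^{(\infty)}$ or $b_k^{(\infty)}$, then for large $t$ the root of $\tilde f_{1,k}$ with the (nearly settled) pair $(b_k^{(t)},b_{k+2}^{(t-1)})$ would have to lie arbitrarily close to an endpoint at which $\tilde f_{1,k}$ is bounded away from $0$ (negative on the left, $+\infty$ on the right), contradicting $\tilde f_{1,k}(b_{k+1}^{(t)})=0$; hence all inequalities remain strict in the limit. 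An alternative route is to argue that the iterates keep the limiting allocation $b_k^{(\infty)}-b_{k+1}^{(\infty)}$ strictly feasible with strictly positive rates, but the endpoint argument is more direct.

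Putting these together, $(b_2^{(\infty)},\cdots,b_K^{(\infty)})$ solves the system \eqref{lemma_bk1_eq2}; by the uniqueness in Theorem \ref{theorem_unique} it coincides with $(\hat b_2,\cdots,\hat b_K)$, and by Lemma \ref{lemma_KKT} it satisfies the KKT conditions of problem (R.P2), which by the discussion after Theorem \ref{theorem_unique} are sufficient for global optimality. Therefore Algorithm II converges to the optimal solution of problem (R.P2), which proves the lemma.
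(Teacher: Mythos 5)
Your proof follows essentially the same route as the paper's: convergence of the monotone bounded iterates (Theorem \ref{theorem_increase} and Lemma \ref{remark_convergence}), passage to the limit in the update equations so that the limit point satisfies \eqref{lemma_bk1_eq2}, and then uniqueness of the KKT point (Theorem \ref{theorem_unique}) together with Lemma \ref{lemma_KKT} to conclude optimality. The only difference is that you explicitly justify the limit-passing step --- continuity of $f_{1,k}$ plus the endpoint argument ruling out collapse of the strict inequalities $b_{k+2}^{(\infty)}<b_{k+1}^{(\infty)}<b_k^{(\infty)}$ --- whereas the paper simply asserts that $\tilde{f}_{1,k}(\bar{b}_{k+1})=0$ holds at the limit; your added care closes a genuine (if minor) gap in the published argument.
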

\begin{proof}
  Since Algorithm II converges as shown in Theorem \ref{theorem_increase} and Lemma \ref{remark_convergence},
  limit $\bar{b}_k=\lim_{t\rightarrow \infty} b_k^{(t)}$ exists, $k\in[2:K]$, and
    $\tilde{f}_{1,k}(\bar{b}_{k+1})=0$  for the given  pair
   $(\bar{b}_{k},\bar{b}_{k+2})$ in \eqref{lemma_bk1_eq}, $\forall k\in[1:K-1]$. Thus, from
   Lemma \ref{lemma_KKT}, it can be observed that solution $(\bar{b}_2,\cdots,\bar{b}_K)$
   satisfies the KKT conditions of problem (R.P2). Furthermore,
    we know  from Theorem \ref{theorem_unique} that solution $(\bar{b}_2,\cdots,\bar{b}_K)$
  is the unique solution of the KKT functions in \eqref{lemma_bk1_eq2}, i.e., Algorithm II yields the optimal solution for problem (R.P2).
\end{proof}
{ \subsection{Complexity of Algorithm II}
The complexity of Algorithm II is mainly determined by
 two crucial parameters:  the number of
arithmetic operations in each iteration and the speed of convergence.

For  each iteration, the number of arithmetic operations
involved in the proposed algorithm is $O((K-1)\log(1/\epsilon_1))$ since $K-1$ bisection searches are
required with $\epsilon_1$ solution accuracy in Algorithm I. In contrary, the
conventional interior point algorithm requires  $O((K-1)^3)$
arithmetic operations for each iteration \cite{Ye1997Interior},
 which does not have any impact by
 $\epsilon_1$,  however significantly increases with $K$.

 The  convergence speed of Algorithm II is
difficult to estimate due to the very  complicated expression of the functions in \eqref{lemma_bk1_eq2}. {
However, we demonstrate the speed of the convergence with the help of numerical results
later in Section \ref{vi},}
which reveals that the proposed algorithm converges faster than the interior point algorithm in most scenarios.}

\section{Discussion}\label{v}
{ In this section, we discuss an appropriate evaluation criterion of the proposed
$\alpha$-fairness scheme.} The $\alpha$-fairness is a qualitative fairness measure
 of  user throughput or
instantaneous rate \cite{shi2014fairness}. To evaluate
quantitative fairness, there is a widely used measurement, known as ``Jain's Index'' or ``Fairness Index'' (FI),
which is defined as \cite{Jain1984@FI}
\begin{align}
  {\rm FI}(\{x_k\})\triangleq \left(\sum_{k=1}^K x_k\right)^2\big/\left(K\sum_{k=1}^K x_k^2\right),
\end{align}
where $x_k$ could be either $F_k$ or $R_k$, and  FI could take any value over $[1/K,1]$.
A larger
FI generally represents a higher fairness level; the case FI=1 corresponds to absolute fairness.
Moreover, for statistical CSIT, FI turns out to be
 long-term fairness within a large number of blocks;
 while for perfect CSIT,  FI represents short-term fairness within each block.

In general, different values of FI can be achieved by adjusting $\alpha$ \cite{Lan2010An}.
For instance, as shown in Lemma \ref{lemma_absolutefairness}
and Remark \ref{remark_afairness}, $ {\rm FI}(\{F_k\})= {\rm FI}(\{R_k\})=1$ as $\alpha\rightarrow \infty$.
Therefore, we can appropriately choose $\alpha$ to achieve the
fairness index requirement (FIr), where FIr$\in[1/K,1]$.
The corresponding optimization problem can be defined as follows:
 \begin{subequations}\label{problem}
\begin{align}
 &\max_{0\leq\alpha\leq 1} \sum_{k=1}^K x_k^*
 \label{objR2}\\
   \textrm{s.t. }&{\rm FI}(\{x_k^*\})\geq{\rm FIr}, \ \alpha\geq 0,
\end{align}
\end{subequations}
where for a given $\alpha$, $x_k^*=F_k(P_k^*)$ in the case of statistical CSIT  and $x_k^*=R_k(b_k^*,b_{k+1}^*)$
in the case of perfect CSI. Note that  the optimal solutions
of problem (F.P2) and (R.P2) are denoted as $\{P_k^*\}$ and $\{b_k^*\}$, respectively.

Note that increasing $\alpha$
does not necessarily { increase} FI as shown in \cite{sediq2013optimal}.
Thus, in general, a one-dimensional search
 is required to find the optimal value of $\alpha$, denoted as $\alpha^*$,
  for the problem defined in \eqref{problem}.
 However, in most scenarios,  $\sum_{k=1}^K x_k^*$ and FI decreases and increases with $\alpha$, respectively, as shown in many existing works (e.g., \cite{Lan2010An,cheng2008an,song2015energy}). Thus, a simple bisection method
 will be utilized to find $\alpha^*$ in most scenarios later  in Section \ref{vi}.

\vspace{-1em}
\section{Numerical Results}\label{vi}
In this section, computer simulation  results are provided to evaluate the sum throughput
 and the ergodic sum rate of NOMA with $\alpha$-fairness. In these simulations,
 some parameters for the considered NOMA system are set as follows.
{ The small scale
fading gain is Rayleigh distributed, i.e., $g_i\sim \mathcal{CN}(0,1)$.
Furthermore, the noise at each user is assumed to be an additive white Gaussian variable with zero mean
and unit variance. In addition,
 the distance between the base station and user $k$ is defined as $d_k=1.5^{K-k}$,
 and the path loss exponent is chosen as 2 to reflect a favorable propagation condition.}
 \footnote{ The parameter settings $d_k=1.5^{(K-k)}$  and $\beta=2$ show that the assumption
in  (17) is valid even when the distance-ratio $d_{k+1}/d_k$ and the path loss exponent
  $\beta$ are small or moderate.  Due to space limitation, the other choices of parameters have not
  been considered in this paper.}
Since the variance of noise power is unity,   the transmit signal-to-noise-ratio (SNR) is equivalent to the transmit power $P$.
\vspace{-1em}
\subsection{Benchmark Schemes}\label{subsection_benchmark}
 Two benchmark transmission schemes of TDMA and NOMA with fixed power allocation (i.e., fixed NOMA) are considered as explained in the following.
\subsubsection{TDMA  Scheme}
{ The  TDMA transmission method is chosen as one of the benchmark schemes in this evaluation,
 as it is
equivalent to any orthogonal MA scheme \cite[Sec. 6.1.3]{tse2005fundamentals}.}
For TDMA transmission, each fading block is assumed to be equally
 divided into $K$ time slots, where   user
$k$ occupies  the $k$-th time slot. By defining the power allocated to   user $k$ as $P_{k}^T$,
the power constraint for the TDMA scheme can be expressed as $$\frac{1}{K}\sum_{k=1}^KP_{k}^T\leq P.$$

Now, similar to the problems (F.P1) an (R.P1) in
Section \ref{ii}, 
 one can formulate two power allocation problems for the TDMA scheme with statistical and
 perfect CSIT, respectively. Furthermore,
 these two  new TDMA power allocation problems can be solved using similar approaches as in Section \ref{iii} and \ref{iv}. The details of these approaches are omitted here
due to space limitations.

\subsubsection{Fixed NOMA} In order to demonstrate the benefits of power allocation,
  NOMA with fixed power allocation is used as another benchmark scheme.
     In particular,  the NOMA
transmission scheme  in Section \ref{ii} is also utilized, but the power allocation
scheme is fixed as $$\tilde{P}_k=\frac{2^{K-k}P}{2^K-1}, \ k\in[1:K],$$
for both  statistical and
 perfect CSIT. Note that this fixed power allocation scheme is similar to the one
 in \cite{yang2016outage}  with a slight modification.

\begin{figure} \centering
\subfigure[Sum throughput vs $r_0$] { \label{fig:Thr_1a}
\includegraphics[width=0.9\columnwidth]{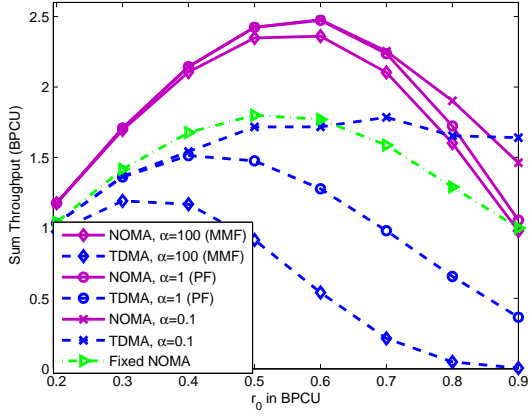}
}
\subfigure[Fairness Index vs $r_0$] { \label{fig:Thr_1b}
\includegraphics[width=0.9\columnwidth]{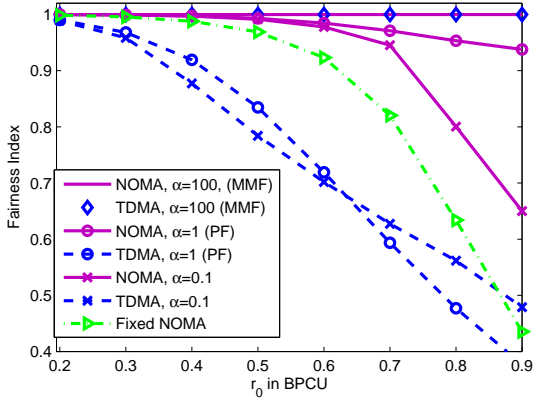}
}
\caption{Sum throughput and fairness index (FI) vs the transmission rate $r_0$ in BPCU, where
 SNR = 20 dB, $K=6$, $\alpha$ = 100, 1, 0.1; MMF and PF denotes max-min fairness  and
   proportional fairness, respectively.}
\label{fig:Thr_1}
\end{figure}
\vspace{-1em}
\subsection{Statistical CSIT}
This subsection focuses on the sum throughput performance of NOMA with $\alpha$-fairness
  and statistical CSIT.
Figs. \ref{fig:Thr_1a} and \ref{fig:Thr_1b} compare the sum throughput and FI
 of NOMA { employing  optimal power allocation  proposed in Section \ref{iii}} with
  the benchmark schemes as a function of the transmission rate $r_0$,   where we set
   $K=6$, SNR = 20 dB, and $\alpha$ = 100, 1, 0.1. 
   \footnote{  Note that max-min fairness (MMF) and
   proportional fairness (PF) can be achieved when
   $\alpha=100$ (i.e., $\alpha$ is sufficiently large) and  $\alpha=1$,
   respectively \cite{shi2014fairness}.}
  As seen in these two sub-figures, NOMA with optimal power allocation enjoys both larger
  sum throughput and FI than the fixed NOMA scheme and the TDMA scheme with optimal power allocation,
  for $\alpha=100$ or 1.
  Moreover, increasing $\alpha$ decreases sum throughput and increases FI for NOMA,
  and absolute fairness can be achieved with $\alpha=100$, which supports the discussions in Lemma \ref{lemma_absolutefairness}.
  { For $\alpha=0.1$, although TDMA with optimal power allocation has a larger sum throughout
   when $r_0=0.9$ BPCU as shown   in Fig. \ref{fig:Thr_1a}, its FI (0.48) is
    lower than the one  achieved by NOMA (0.65).
This is due to the fact that an additional power constraint is imposed on NOMA in \eqref{noma_constraint},
which might reduce the sum throughput with small values of $\alpha$, however, it
 can guarantee the fairness level of NOMA. From Fig. \ref{fig:Thr_1b},
  one can observe that decreasing
$\alpha$ from $1$ to $0.1$ results in the improvement of FI for TDMA when $r_0\geq 0.6$,
 which  is consistent with  the conclusion made in \cite{sediq2013optimal} that increasing $\alpha$ does not necessarily increase FI.}

\begin{figure}[tbp]
    \begin{minipage}[t]{1\linewidth} 
    \centering
    \includegraphics[width=3.1in]{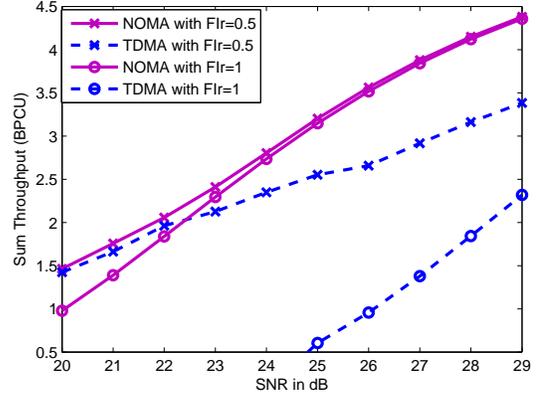}\vspace{-1em}
    \caption{Sum throughput vs SNR, where $r_0=0.9$ BPCU, $K=6$,  FIr = 0.5, 1.}
    \label{fig:Thr_2}
  \end{minipage}%
  \end{figure}
  \begin{figure}[tbp]
  \begin{minipage}[t]{1\linewidth}
    \centering
    \includegraphics[width=3.1in]{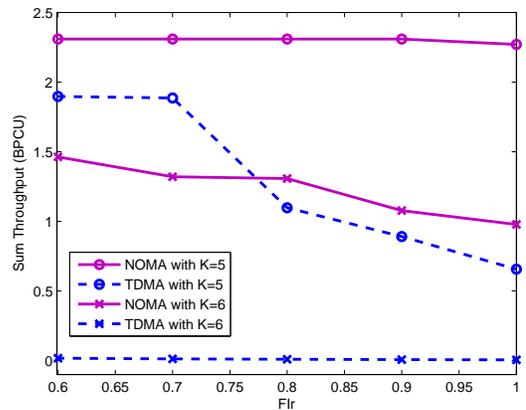}\vspace{-1em}
    \caption{Sum throughput vs the  fairness index requirement (FIr), where $r_0=0.9$ BPCU, SNR
    = 20 dB. $K$ = 5, 6.}
    \label{fig:Thr_3}
  \end{minipage}
\end{figure}

  For a fair comparison between NOMA and TDMA schemes, the same FI is required
  in Figs. \ref{fig:Thr_2} and \ref{fig:Thr_3}. Specifically, we utilize $\alpha$ to
  adjust the value of FI as shown in problem \eqref{problem}, where a bisection search
  is adopted by NOMA, whereas an exhaustive search needs to be adopted by TDMA since
  its FI does not necessarily increase with $\alpha$ as shown in Fig. \ref{fig:Thr_1b}.
  In Fig. \ref{fig:Thr_2}, the sum throughput is depicted as a function of SNR,
  where $r_0=0.9$ BPCU, $K=6$, and the required
  FI is set as FIr = 0.5 or 1. In Fig. \ref{fig:Thr_3},
   the sum throughput is presented as a function of FIr,
  where $r_0=0.9$ BPCU, SNR = 20 dB, $K=5$ or 6.
  From these two figures, one can observe that moderate or high FIr significantly  decreases
  the sum throughput of TDMA, however, it has  less impact on NOMA,
  i.e., NOMA provides a significant performance gain compared to TDMA with moderate or high
  FIr.

\begin{figure} \centering
\subfigure[Ergodic sum rate vs SNR] { \label{fig:rate_1a}
\includegraphics[width=0.9\columnwidth]{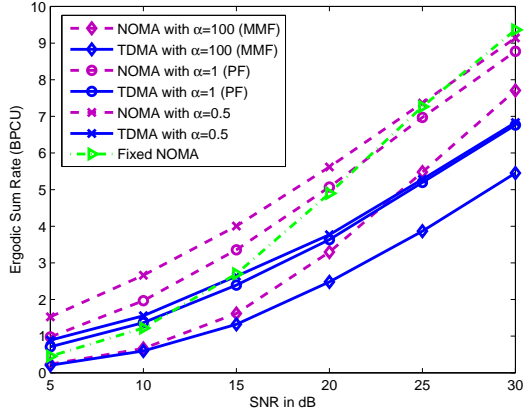}
}
\subfigure[Average fairness Index vs SNR] { \label{fig:rate_1b}
\includegraphics[width=0.9\columnwidth]{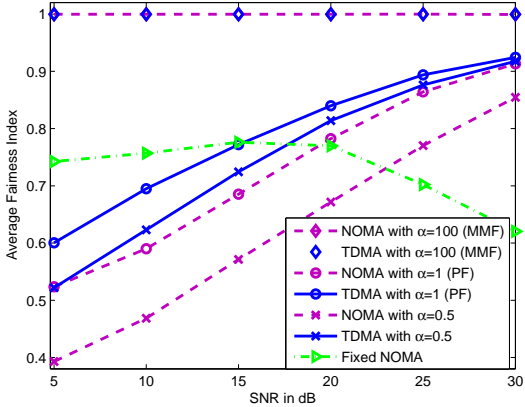}
}
\caption{ Ergodic sum rate and average FI vs SNR in dB, where $K=5$, $\alpha$ = 100, 1, 0.5.}
\label{fig:rate_1}
\end{figure}\vspace{-1em}

\subsection{Perfect CSIT}
This subsection focuses on the ergodic rate performance of NOMA with
$\alpha$-fairness
  and perfect CSIT.
Figs. \ref{fig:rate_1a} and \ref{fig:rate_1b} compare the sum throughput and FI
 of NOMA { employing  optimal AO power allocation algorithm
 proposed in Section \ref{iv}} with
  the benchmark schemes as a function of SNR,   where the parameters are  set as
   $K=5$ and $\alpha$ = 100, 1, 0.5. 
 { As seen in these two sub-figures, the
fixed NOMA has a large ergodic sum rate
  but a very poor average  FI when SNR = 30 dB.}
  On the other hand, NOMA with optimal power allocation has a larger
  ergodic sum rate with a low average  FI  compared to the TDMA scheme.
  For both NOMA and TDMA, increasing $\alpha$ decreases  ergodic sum rate, however
   increases average  FI.
    The absolute fairness can be achieved with $\alpha=100$, which validates  the discussions in
    Remark \ref{remark_afairness}.

\begin{figure}[tbp]
   \begin{minipage}[t]{1\linewidth} 
    \centering
    \includegraphics[width=3.2in]{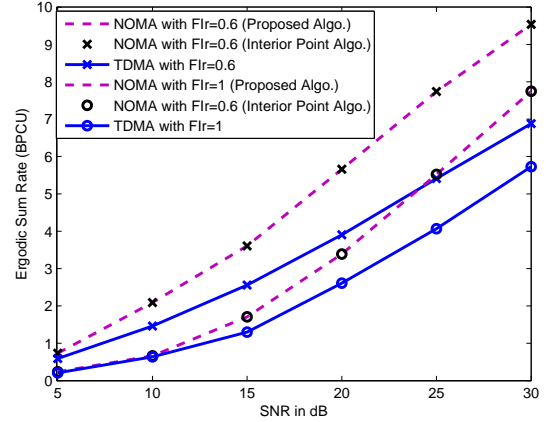}\vspace{-1em}
    \caption{Ergodic sum rate vs SNR in dB, where  $K=5$, SNR = 20 dB, FIr = 0.6, 1.}
    \label{fig:rate_2}
  \end{minipage}%
  \end{figure}
  \begin{figure}[tbp]
  \begin{minipage}[t]{1\linewidth}
    \centering
    \includegraphics[width=3.2in]{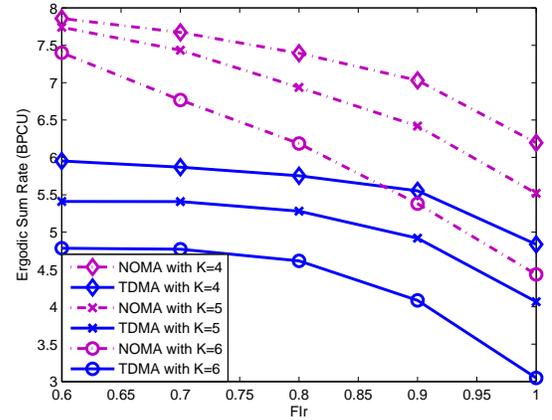}\vspace{-1em}
    \caption{Ergodic sum rate vs FIr,  where SNR = 20 dB, and $K$ = 4, 5, 6.}
    \label{fig:rate_3}
  \end{minipage}
\end{figure}

  In order to make a fair comparison between NOMA and TDMA schemes, the same FI is required
  in Figs. \ref{fig:rate_2} and \ref{fig:rate_3}. Specifically, we utilize $\alpha$ to
  adjust the value of FI as shown in problem \eqref{problem}, where a bisection search
  is adopted by both NOMA and TDMA schemes.
  In Fig. \ref{fig:Thr_2}, the  ergodic sum rate is depicted as a function of SNR,
  where  $K=5$, and  FIr = 0.6 or 1. In Fig. \ref{fig:Thr_3},
   the  ergodic sum rate is depicted as a function of FIr,
  with SNR = 20 dB, $K=4$, 5 or 6.
  As seen in these two figures, one can observe that NOMA
  provides a significant performance gain than the TDMA scheme in terms of  ergodic sum rate
  at the same required fairness level. Moreover, the proposed power allocation algorithm
  achieves the same   ergodic sum rate as the conventional interior point algorithm,
  as shown in Fig. \ref{fig:rate_2}.

  Figs. \ref{fig:Iteration_K4} and \ref{fig:Iteration_K8} compare
  convergence speeds of the proposed algorithm in Section \ref{iv} (i.e., Algorithm II)
  and the conventional interior point algorithm with $K=4$ and 8, respectively.
   Since $\textrm{Norm}\left[\mathbf{f}_1^{(t)}\right]$
 is utilized as the stopping criterion for each fading block, we depict
 its average value as a function of the number of iterations, where
 the required accuracy of Algorithm I (involved in Algorithm II) is set as $\epsilon_1=10^{-5}$, and
 $\alpha=5$,
 2, 1. As evidenced by these two sub-figures, { one can observe} that the proposed algorithm
 converges more faster than the interior point algorithm in most scenarios,
 except with $K=8$, $\alpha=2$ and the number of iterations is larger than 20.

\begin{figure} \centering
\subfigure[$K=4$] { \label{fig:Iteration_K4}
\includegraphics[width=0.85\columnwidth]{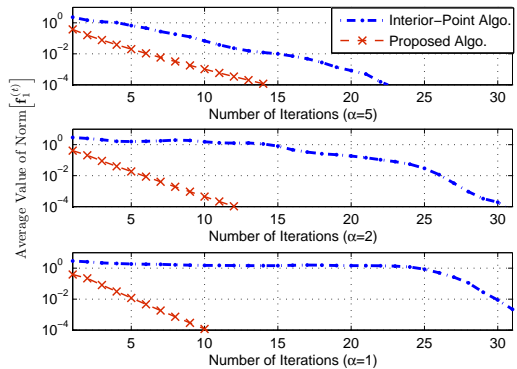}
}
\subfigure[$K=8$] { \label{fig:Iteration_K8}
\includegraphics[width=0.85\columnwidth]{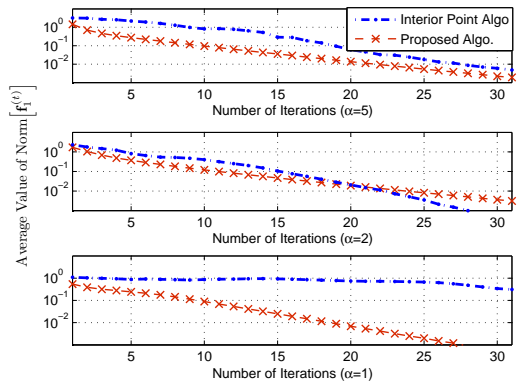}
}
\caption{The average value of Norm$\left[\mathbf{f}_1^{(t)}\right]$ vs the number of iterations, where SNR = 20 dB; $K$ = 4, 8;
$\alpha$ = 5, 2, 1.}
\label{fig:Iteration_K}
\end{figure}
\vspace{-1em}
\section{Conclusions}\label{vii}
{ This paper investigated $\alpha$-fairness based  power allocation schemes
 for  sum throughput and ergodic rate maximization problems in a downlink NOMA system with
statistical and perfect CSIT.}
For statistical CSIT, the outage probability of each user was analyzed,
and the  power allocation strategy was developed
 for sum throughput
 maximization with $\alpha$-fairness. Specifically,
  the original non-convex sum throughput maximization problem was converted into
   an equivalent problem and demonstrated that
the transformed equivalent problem is convex for the case of $\alpha\geq 1$.
In addition, it was shown
that the problem turns out to be convex for $\alpha<1$ by fixing
 the first power parameter and the number of
power parameters that are below $\frac{1-\alpha}{2}$.
{ Next, the instantaneous sum rate
 maximization with $\alpha$-fairness was solved for perfect CSIT, for which it
  was proven that there exists only one solution to satisfy the corresponding KKT conditions. Then,
a simple AO algorithm was developed  to solve these KKT equations.}
As this work only considered single antenna NOMA, an interesting future work  is to extend  to  MIMO NOMA with fairness constrains. Moreover, considering user fairness for the other
 more practical channel model (e.g.,  Saleh-Valenzuela multi-path model \cite{Gao2016Energy}) or
 considering user fairness
over a time-domain window  would be also one of the possible future directions.
\vspace{-1em}
\appendices
\section{Proof of Proposition \ref{proposition3}}\label{proof_proposition3}
 This proposition can be proven by {\em reduction to absurdity}. Denote the optimal power of problem
(F.P3) as $\{P_k^*\}$.
Without loss of generality, it can be assumed that there exist $i<j$, $i,j\in[1:k_0]$, such that
the constraint in \eqref{power_inequality} is not
  binding, i.e.,
$d_i^{\beta}P_i^*>d_j^{\beta}P_j^*$.
From Proposition \ref{proposition2} and the definition of $k_0$,
  $0<P_i^*<P_j^*<\frac{1-\alpha}{2}$ can be obtained.
Now, consider another  power pair $\left(P_i^*-\epsilon_1,
P_j^*+\epsilon_2\right)$, where 
$(\epsilon_1,\epsilon_2)\triangleq\left(\frac{\epsilon}{\Gamma_i},
\frac{\epsilon}{\Gamma_j}\right)$, and $\epsilon$ satisfies
\begin{align}
0<\epsilon< \min\left\{\Gamma_iP_i^*,
\Gamma_j\left(\frac{1-\alpha}{2}-P_j^*\right),
\frac{d_i^{\beta}P_i^*-d_j^{\beta}P_j^*}
{{d_i^{\beta}}/{\Gamma_i}+{d_j^{\beta}}/{\Gamma_j}}\right\}.
\end{align}
Obviously we have  \begin{align}
&0\leq P_i^*-\epsilon_1<P_j^*+\epsilon_2<\frac{1-\alpha}{2}, \label{addeq1}\\
& d_i^{\beta}(P_i^*-\epsilon_1)\geq d_j^{\beta}(
P_j^*+\epsilon_2),\label{addeq2}\\
&\Gamma_i(P_i^*-\epsilon_1)+\Gamma_j(P_j^*+\epsilon_2)=
\Gamma_iP_i^*+\Gamma_jP_j^*,\label{addeq3}
\end{align}
where \eqref{addeq1} implies that the value of $k_0$ will { remain the same} by
replacing the power pair  $(P_i^*,P_j^*)$ by $(P_i^*-\epsilon_1,P_j^*+\epsilon_2)$;
\eqref{addeq2} and \eqref{addeq3} ensure that $(P_i^*-\epsilon_1,P_j^*+\epsilon_2)$ satisfies
 the power constraints in \eqref{power_constraint2} and \eqref{power_inequality}, respectively.
 Next, we need to verify that $G(P_i^*-\epsilon_1)+G(P_i^*+\epsilon_2)>G(P_i^*)+G(P_j^*)$,
 where $G(x)$ is defined in Proposition \ref{proposition1}.



 Based on   {\em Lagrange mean value theorem},  there exists some $\varepsilon_1\in(P_i^*-\epsilon_1,
 P_i^*)$ and  $\varepsilon_2\in(P_j^*,
 P_j^*+\epsilon_2)$ such that
 \begin{align}
  G(P_i^*)- G(P_i^*-\epsilon_1)=\epsilon_1G'(\varepsilon_1),\label{addeq4}\\
   G(P_j^*+\epsilon_2)- G(P_j^*)=\epsilon_2G'(\varepsilon_2).\label{addeq5}
 \end{align}
 Note that since $\varepsilon_1<\varepsilon_2$ and $G''(x)>0$ if $x\in[0,\frac{1-\alpha}{2})$,
  as shown in Proposition \ref{proposition1}, $G'(\varepsilon_1)<G'(\varepsilon_2)$ holds;
  furthermore, since $G'(x)=\frac{1-\alpha}{x^2}\exp\left(-\frac{1-\alpha}{x}\right)$, $x\geq0$,
  $0<G'(\varepsilon_1)<G'(\varepsilon_2)$ can be obtained. In addition,
 $\epsilon_1<\epsilon_2$ holds since { $\Gamma_i>\Gamma_j$} shown in \eqref{assumption}.
 Thus, from \eqref{addeq4} and \eqref{addeq5}, one can observe that $G(P_i^*)- G(P_i^*-\epsilon_1)
 <G(P_j^*+\epsilon_2)- G(P_j^*)$.

In summary, power pair  $
\left(P_i^*-\epsilon_1,
P_j^*+\epsilon_2\right)$  yields a larger  value of the objective function in problem (F.P3), which
 contradicts with the optimality of  $
\left(P_i^*,P_j^*\right)$.
Therefore, the case $d_i^{\beta}P_i^*>d_j^{\beta}P_j^*$ is not optimal,
and $d_i^{\beta}P_i^*=d_j^{\beta}P_j^*$ holds at the optimal solution of problem (F.P3).
\vspace{-1em}
\section{Proof of Lemma \ref{lemma1}}\label{proof_lemma_1}

To solve problem (F.P5), we first consider the following problem by relaxing the constraint in \eqref{power_inequality} of problem (F.P5):
 \begin{subequations}\label{problem31}
\begin{align}
\textrm{(F.P6)}&\min_{\{{P}_k\}} \quad \sum_{k=1}^K\frac{1}{P_k}
 \label{obj31}\\
   \textrm{s.t. }&
  \eqref{power_constraint2},\
   {P}_k\geq 0, \ k\in[1:K].\label{constraint311}
\end{align}
\end{subequations}
The Lagrangian function
 for this problem  is defined as:
\begin{align}
  \mathcal{L}(\{P_k\},\omega,\{\lambda_k\})\triangleq \sum_{k=1}^K
  \frac{1}{P_k}&+\omega\left[\hat{r}_0^2\sum_{k=1}^K \Gamma_k{P}_k-P\right]\nonumber\\
 &-\sum_{k=1}^K \lambda_kP_k,
\end{align}
where $\omega,\lambda_k\geq0$ are Lagrange multipliers. The KKT conditions are
given by
\begin{align}\label{largrange}
  \frac{\partial \mathcal{L}}{\partial P_k}=-\frac{1}{P_k^2}+\hat{r}_0^2\omega \Gamma_k-\lambda_k=0.
\end{align}
In addition, from the complementary slackness conditions (omitted here for simplicity), obviously we have
$\lambda_k=0$ and $\omega>0$, and the power constraint in \eqref{power_constraint2} is binding.
Therefore, from \eqref{power_constraint2} and \eqref{largrange}, the optimal solution of problem (F.P6)
can be obtained as shown in \eqref{solution_(F.P5)}.

From \eqref{solution_(F.P5)}, one can observe that $d_k^{\beta}P_k$ decreases with $k$, which means that the
constraint in \eqref{power_inequality} is satisfied. Thus, problems (F.P5) and (F.P6) have the same optimal solution.
\vspace{-1em}
\section{Proof of Lemma \ref{lemma_absolutefairness}}\label{proof_lemma_afairness}
 Lemma \ref{lemma_absolutefairness}  can also be proven by {\em reduction to absurdity}.
 Denote the optimal power of problem
(F.P7) as $\{P_k^*\}$; based on Proposition \ref{proposition2}, it holds that $P_i^*\leq P_j^*$,
$\forall i<j, \ i,j\in[1:K]$.
Assume without loss of generality that there exist $i$ and $j$ satisfying $i<j$, $i,j\in[1:K]$, such that
$0<P_i^*<P_j^*$. Consider another  power pair $
\left(P_i^*+\epsilon_1,
P_j^*-\epsilon_2\right)$, where $\epsilon_1+\epsilon_2<P_j^*-P_i^*$ and
$(\epsilon_1,\epsilon_2)\triangleq\left(\frac{\epsilon}{\Gamma_i},
\frac{\epsilon}{\Gamma_j}\right)$ for $\epsilon>0$.
Obviously we have \begin{align}&P_i^*+\epsilon_1<
P_j^*-\epsilon_2,\label{addeq6}\\
&d_i^{\beta}(P_i^*+\epsilon_1)\geq d_j^{\beta}(
P_j^*-\epsilon_2),\label{addeq7}\\
& \Gamma_i(P_i^*+\epsilon_1)+\Gamma_j(P_j^*-\epsilon_2)
=\Gamma_iP_i^*+\Gamma_jP_j^*,\label{addeq8}\end{align}
where \eqref{addeq7} and \eqref{addeq8} ensure that $\left(P_i^*+\epsilon_1,
P_j^*-\epsilon_2\right)$ satisfies
 the power constraints in \eqref{power_constraint2} and \eqref{power_inequality}, respectively.
 Denote the function $G_1(x)\triangleq\exp\left(\frac{\alpha-1}{x}\right)$, where $x> 0$,
so the objective function in problem (F.P7) can be expressed as $\sum_{k=1}^K G_1(P_k)$.
Next, we will verify that $G_1(P_i^*+\epsilon_1)+G_1(P_i^*-\epsilon_2)>G_1(P_i^*)+G_1(P_j^*)$.

 Based on   {\em Lagrange mean value theorem},  there exists some $\varepsilon_1\in(P_i^*,
 P_i^*+\epsilon_1)$ and  $\varepsilon_2\in(P_j^*-\epsilon_2,
 P_j^*)$ such that
 \begin{align}G_1(P_i^*+\epsilon_1)-G_1(P_i^*)=\epsilon_1 G_1'(\varepsilon_1),\label{G11}\\
 G_1(P_j^*)-G_1(P_i^*-\epsilon_2)=\epsilon_2 G_1'(\varepsilon_2).\label{G12}\end{align}
 Since the derivative of $G_1(x)$ is
\begin{align}G_1'(x)&=-\frac{\alpha-1}{x^2}\exp\left(\frac{\alpha-1}{x}\right)\nonumber\\
&=-(\alpha-1)\exp\left(-2\ln(x)+\frac{\alpha-1}{x}\right),\nonumber\end{align} we have
\begin{align}
  \frac{\epsilon_1 G_1'(\varepsilon_1)}{\alpha-1}=
  -\exp\left(\ln(\epsilon_1)-2\ln(\varepsilon_1)+\frac{\alpha-1}{\varepsilon_1}\right),\label{G1'(1)}\\
   \frac{\epsilon_2 G_1'(\varepsilon_2)}{\alpha-1}=
  -\exp\left(\ln(\epsilon_2)-2\ln(\varepsilon_2)+\frac{\alpha-1}{\varepsilon_2}\right).\label{G1'(2)}
\end{align}
Furthermore, from \eqref{addeq6}, one can easily  obtain that $\varepsilon_1<\varepsilon_2$. Thus, from
\eqref{G1'(1)} and \eqref{G1'(2)}, we have
\begin{align}\epsilon_1 G_1'(\varepsilon_1)<\epsilon_2 G_2'(\varepsilon_2)<0
\textrm{ as }\alpha\rightarrow \infty.\label{G1'}\end{align}

Now, combing \eqref{G11}, \eqref{G12} with \eqref{G1'}, $G_1(P_i^*+\epsilon_1)+G_1(P_i^*-\epsilon_2)>G_1(P_i^*)+G_1(P_j^*)$ holds when $\alpha\rightarrow \infty$.
In summary, power pair  $
\left(P_i^*+\epsilon_1,
P_j^*-\epsilon_2\right)$  yields a smaller value of the objective function  for problem (F.P7), which
 contradicts with the optimality of  $
\left(P_i^*,P_j^*\right)$ for problem (F.P7).
Therefore, when $\alpha\rightarrow \infty$, the inequality  $P_i^*<P_j^*$ does not hold, i.e.,
 $P_i^*\geq P_j^*$. Based on Proposition \ref{proposition2},
 $P_i^*=P_j^*$ can be obtained.
\vspace{-1em}
 \section{Proof of Lemma \ref{lemma_KKT}}\label{proof_lemma_kkt}
  The Lagrangian function of  problem (R.P2) is first expressed as
  \begin{align}
    \mathcal{L}(\{b_i\},\{\lambda_i\})
    \triangleq \sum_{i=1}^{K} u_{\alpha}(R_i(b_i,b_{i+1}))
    -\sum_{i=1}^{K}\lambda_i(b_{i+1}-b_{i}) \end{align}
  where we define  $\{b_i\}\triangleq\{b_2,\cdots,b_K\}$ and $\{\lambda_i\}\triangleq
  \{\lambda_1,\cdots,\lambda_K\}$, $\lambda_i\geq 0$, are Lagrange multipliers. Based on
  the definition of
  $\mu_{\alpha}(x)$ in \eqref{u(F_k)},
  the KKT conditions are given by
\begin{align}\frac{\partial \mathcal{L}}{\partial b_{k+1}}=&-\frac{\left(R_{k}(b_{k},b_{k+1})\right)^{-\alpha}}{b_{k+1}+\frac{1}{H_k}}
+\frac{\left(R_{k+1}(b_{k+1},b_{k+2})\right)^{-\alpha}}{b_{k+1}
+\frac{1}{H_{k+1}}}\nonumber\\
&-\lambda_k+\lambda_{k+1}=0,\ \forall k\in[1:K-1].\label{largrange_rate_bk1}
\end{align}
 The complementary slackness conditions can be written as
\begin{align}
  \lambda_{k+1}(b_{k+1}-b_{k+2})=0,\\
  \lambda_k(b_{k+1}-b_{k})=0.
\end{align}
Note that $R_{k+1}\left(b_{k+1},b_{k+2}\right)=0$ if $b_{k+1}=b_{k+2}$, and
$R_{k}\left(b_{k},b_{k+1}\right)=0$ if $b_{k}=b_{k+1}$. However,
 from \eqref{largrange_rate_bk1}, $R_{k+1}\left(b_{k+1},b_{k+2}\right),R_{k}\left(b_{k},b_{k+1}\right)>0$
 needs to be satisfied, so
we have $b_{k+2}<b_{k+1}<b_k$ at the optimal solution,
and hence $\lambda_k=\lambda_{k+1}=0$. Thus, from \eqref{largrange_rate_bk1},
\begin{align}-\frac{\left(R_{k}(b_{k},b_{k+1})\right)^{-\alpha}}{b_{k+1}
+\frac{1}{H_k}}&+\frac{\left(R_{k+1}(b_{k+1},b_{k+2})\right)^{-\alpha}}{b_{k+1}
+\frac{1}{H_{k+1}}}=0, \nonumber\\& \forall k\in[1:K-1].\label{largrange_rate_eq}
\end{align}
The above equation can be equivalently transformed to ${f}_{1,k}(b_{k},b_{k+1},b_{k+2})=0$
as defined in \eqref{lemma_bk1_eq2}, which
completes the proof of this lemma.
\vspace{-1em}
\section{Proof of Theorem \ref{theorem_unique}}\label{proof_theorem_unique}


 Denote $(\hat{b}_2,\cdots,\hat{b}_K)$ as a solution of the KKT functions in \eqref{lemma_bk1_eq2}.
Now, we  verify that   $(\hat{b}_2,\cdots,\hat{b}_K)$ is the unique solution of
 these functions. 
To prove this theorem,  {\em reduction to absurdity} is adopted.
  In particular, we assume that, beyond $(\hat{b}_2,\cdots,\hat{b}_K)$,
   there also exists another solution  $(\hat{\hat{b}}_2,\cdots,\hat{\hat{b}}_K)$
  satisfying the KKT conditions in \eqref{lemma_bk1_eq2}.
Assume without loss of generality that $\hat{\hat{b}}_{K}>{\hat{b}}_{K}$.
Let $k=K-1$ in \eqref{lemma_bk1_eq2}, then we have
\begin{align}
 \ln&\left(1+{\hat{b}}_{K-1}H_{K-1}\right)=
  {\ln\left({1+{\hat{b}}_{K}H_{K-1}}\right)}\nonumber\\
  &\qquad+f_{2,K-1}\left({\hat{b}}_{K}\right)\ln\left(1+{\hat{b}}_{K} H_K\right),\\
  \ln&\left({1+\hat{\hat{b}}_{K-1}H_{K-1}}\right)=
 {\ln\left({1+\hat{\hat{b}}_{K}H_{K-1}}\right)}\nonumber\\
 &\qquad+f_{2,K-1}\left(\hat{\hat{b}}_{K}\right)\ln\left(1+\hat{\hat{b}}_{K} H_K\right),
  \end{align}
  where function $f_{2,K-1}(x)$ is defined as
\begin{align}\label{f2x}
f_{2,k}(x)\triangleq \left(\frac{x+\frac{1}{H_{k+1}}}{x+\frac{1}{H_{k}}}
    \right)^{1/\alpha}&= \left(1-\frac{\frac{1}{H_k}-\frac{1}{H_{k+1}}}{x+\frac{1}{H_{k}}}
    \right)^{1/\alpha}, \nonumber\\ &\hspace{0cm}k\in[1:K-1].\end{align}
   Since $f_{2,K-1}(x)$ increases with $x$ when $x>0$, we can
  obtain  \begin{align}\label{two_resluts}&\hat{\hat{b}}_{K-1}>{\hat{b}}_{K-1},\nonumber\\
   \textrm{ and } &\ln\left(\frac{{1+\hat{\hat{b}}_{K-1}H_{K-1}}}
  {{1+\hat{\hat{b}}_{K}H_{K-1}}}\right)>\ln\left(\frac{{1+{\hat{b}}_{K-1}H_{K-1}}}
  {{1+{\hat{b}}_{K}H_{K-1}}}\right).\end{align}

  Now,  let $k=K-2$ in \eqref{lemma_bk1_eq2}, we have
  \begin{align}  \ln&\left({1+{\hat{b}}_{K-2}H_{K-2}}\right)=
 {\ln\left({1+{\hat{b}}_{K-1}H_{K-2}}\right)}\nonumber\\
 &+f_{2,K-2}\left(\hat{b}_{K-1}\right)\ln\left(\frac{1+\hat{b}_{K-1}
  H_{K-1}}{1+\hat{b}_{K} H_{K-1}}
 \right),\label{eq_add1}\\
 \ln&\left({1+\hat{\hat{b}}_{K-2}H_{K-2}}\right)=
 {\ln\left({1+\hat{\hat{b}}_{K-1}H_{K-2}}\right)}\nonumber\\
 &+f_{2,K-2}\left(\hat{\hat{b}}_{K-1}\right)
 \ln\left(\frac{1+
 \hat{\hat{b}}_{K-1} H_{K-1}}{1+\hat{\hat{b}}_{K} H_{K-1}}
 \right).
  \label{eq_add2}\end{align}
  Based on \eqref{two_resluts}, \eqref{eq_add1} and \eqref{eq_add2}, we have
  \begin{align}&\hat{\hat{b}}_{K-2}>{\hat{b}}_{K-2}, \nonumber\\ \textrm{ and } & \ln\left(\frac{{1+\hat{\hat{b}}_{K-2}H_{K-2}}}
  {{1+\hat{\hat{b}}_{K-1}H_{K-2}}}\right)>\ln\left(\frac{{1+{\hat{b}}_{K-2}H_{K-2}}}
  {{1+{\hat{b}}_{K-1}H_{K-2}}}\right).\end{align}

   By analogy,
  $\hat{\hat{b}}_{k}>{\hat{b}}_{k}$ can be verified from $k=K-3$ to $k=1$, i.e., $\forall k\in[1:K-3]$.
   However,  $\hat{\hat{b}}_{K-1}={\hat{b}}_{K-1}=P$ holds
  for problem (R.P2), which contradicts with the result that $\hat{\hat{b}}_{k}>{\hat{b}}_{k}$,
    $\forall k\in[1:K-3]$. Therefore,  only a unique  solution $(\hat{b}_2,\cdots,\hat{b}_K)$ of problem (R.P2) exists to satisfy the   KKT conditions in \eqref{lemma_bk1_eq2}.

\section{Proof of Theorem \ref{theorem_increase}}\label{proof_theorem_increase}
This theorem is proven based on the  {\em inductive method}. Specifically, for a given $t_0\geq 1$,
   we   assume that ${b}_{k}^{(t_0)}>{b}_k^{(t_0-1)}$, $\forall k\in[2:K]$,
    and then we prove that ${b}_{k}^{(t_0+1)}>{b}_{k}^{(t_0)}$, $\forall k\in[2:K]$.
    First, recall that function $f_{2,K-1}(x)$ is defined  \eqref{f2x}, which increases with $x$.
    Next, three different cases are considered.
\vspace{-1em}
\subsection{Case $k=1$}
 In the $t$-th iteration, from Algorithm II and Lemma \ref{lemma_bk1}, we have
 \begin{align}\label{eq_ck}
   {\ln\left(\tilde{c}_{1}^{(t)}\right)}=
    {f_{2,1}\left(b_{2}^{(t)}\right)\ln\left(\tilde{\tilde{c}}_{2}^{(t)}\right)},\ 
  \end{align}
when $k=1$, where we define
\begin{align}\label{c_kt}
\tilde{c}_{k}^{(t)}\triangleq \frac{1+H_{k}b_{k}^{(t)}}{1+H_{k}b_{k+1}^{(t)}},&\
\tilde{\tilde{c}}_{k+1}^{(t)}\triangleq \frac{1+H_{k+1}b_{k+1}^{(t)}}{1+H_{k+1}b_{k+2}^{(t-1)}},\nonumber\\
&\forall k\in[1:K-1].\end{align}

Next, we  consider two cases: $\tilde{\tilde{c}}_{2}^{(t_0+1)}\geq \tilde{\tilde{c}}_{2}^{(t_0)}$ and $\tilde{\tilde{c}}_{2}^{(t_0+1)}< \tilde{\tilde{c}}_{2}^{(t_0)}$.

If
$\tilde{\tilde{c}}_{2}^{(t_0+1)}\geq \tilde{\tilde{c}}_{2}^{(t_0)}$, since we have assumed that $b_{3}^{(t_0)}>b_{3}^{(t_0-1)}$,
obviously $b_{2}^{(t_0+1)}>b_{2}^{(t_0)}$ holds based on \eqref{c_kt}.

If $\tilde{\tilde{c}}_{2}^{(t_0+1)}< \tilde{\tilde{c}}_{2}^{(t_0)}$,
 we adopt {\em reduction to absurdity} to prove that $b_{2}^{(t_0+1)}>b_{2}^{(t_0)}$.
  Specifically, we
 { assume that} $b_{2}^{(t_0+1)}\leq b_{2}^{(t_0)}$,
so we have $f_{2,1}\left(b_{2}^{(t_0+1)}\right)\leq f_{2,1}\left(b_{2}^{(t_0)}\right)$. Thus,
$f_{2,1}\left(b_{2}^{(t_0+1)}\right)\ln(\tilde{\tilde{c}}_{2}^{(t_0+1)})< f_{2,1}\left(b_{2}^{(t_0)}\right)
\ln\left(\tilde{\tilde{c}}_{2}^{(t_0)}\right)$ can be obtained.
From \eqref{eq_ck}, $\ln\left(\tilde{c}_{1}^{(t_0+1)}\right)<\ln\left(\tilde{c}_{1}^{(t_0)}\right)$ holds. However,
from \eqref{c_kt}, $\ln\left(\tilde{c}_{1}^{(t_0+1)}\right)\geq \ln\left(
\tilde{c}_{1}^{(t_0)}\right)$ under the assumption that $b_{2}^{(t_0+1)}\leq b_{2}^{(t_0)}$,
since $b_{1}^{(t_0+1)}=b_{1}^{(t_0)}=P$.
This implies that the assumption $b_{2}^{(t_0+1)}\leq b_{2}^{(t_0)}$ does not hold,
and thus $b_{2}^{(t_0+1)}>b_{2}^{(t_0)}$.
\vspace{-1em}
\subsection{Case $k\in[2:K-2]$}
 Similarly, in the $t$-th iteration, from Algorithm II and Lemma \ref{lemma_bk1}, we have
 \begin{align}\label{eq_ck2}
   {\ln\left(\tilde{c}_{2}^{(t)}\right)}=
    {f_{2,2}\left(b_{3}^{(t)}\right)\ln\left(\tilde{\tilde{c}}_{3}^{(t)}\right)}
   ,\ 
  \end{align}
when $k=2$. As in the previous case, $b_{2}^{(t_0+1)}>b_{2}^{(t_0)}$ if
 $\tilde{\tilde{c}}_{3}^{(t_0+1)}\geq \tilde{\tilde{c}}_{3}^{(t_0)}$.

Now, {\em reduction to absurdity} is also adopted if $\tilde{\tilde{c}}_{3}^{(t_0+1)}< \tilde{\tilde{c}}_{3}^{(t_0)}$.
 Specifically, similar to the previous case $k-1$,
 $\ln\left(\tilde{c}_{2}^{(t_0+1)}\right)<\ln\left(\tilde{c}_{2}^{(t_0)}\right)$ can be obtained,
  if we assume that $b_{2}^{(t_0+1)}\leq b_{2}^{(t_0)}$ in \eqref{eq_ck2}.
   However,
from  \eqref{c_kt}, $\ln\left(\tilde{c}_{2}^{(t_0+1)}\right)\geq \ln\left(
\tilde{c}_{2}^{(t_0)}\right)$  under the assumption that $b_{3}^{(t_0+1)}\leq b_{3}^{(t_0)}$,
since $b_{2}^{(t_0+1)}>b_{2}^{(t_0)}$ as verified in the previous case.
This implies that the assumption $b_{3}^{(t_0+1)}\leq b_{3}^{(t_0)}$ does not hold,
and  thus $b_{3}^{(t_0+1)}>b_{3}^{(t_0)}$.

Similarly, $b_{k}^{(t_0+1)}>b_{k}^{(t_0)}$ can be proven iteratively, for $k\in[3:K-2]$.
\vspace{-1em}
\subsection{Case $k=K-1$}
From Algorithm II and Lemma \ref{lemma_bk1}, we have
 \begin{align}\label{eq_ck3}
  {\ln\left(\tilde{c}_{K-1}^{(t)}\right)}=
    {f_{2,K-1}\left(b_{K}^{(t)}\right)\ln\left(\tilde{\tilde{c}}_{K}^{(t)}\right)}
   ,\ 
  \end{align}
when $k=K-1$.  Note that $b_{K}^{(t_0+1)}>b_{K}^{(t_0)}$ can be proven using almost the same
steps to the previous two cases. There is only a slight difference that is $b_{K+1}^{t_0}
=b_{K+1}^{(t_0-1)}=0$. In order to show that $b_{K}^{(t_0+1)}>b_{K}^{(t_0)}$ if
$\tilde{\tilde{c}}_{2}^{(t_0+1)}\geq \tilde{\tilde{c}}_{2}^{(t_0)}$,
 we only need to verify that  $\tilde{\tilde{c}}_{2}^{(t_0+1)}\neq \tilde{\tilde{c}}_{2}^{(t_0)}$.
 Specifically, from \eqref{c_kt}, $b_{K}^{(t_0+1)}
=b_{K}^{(t_0)}$ if $\tilde{\tilde{c}}_{2}^{(t_0+1)}= \tilde{\tilde{c}}_{2}^{(t_0)}$,
so ${\tilde{c}}_{2}^{(t_0+1)}={\tilde{c}}_{2}^{(t_0)}$ from \eqref{eq_ck3} and
$b_{K-1}^{(t_0)}=b_{K-1}^{(t_0-1)}$ can be obtained from \eqref{c_kt}.
 However, $b_{K-1}^{(t_0+1)}>b_{K-1}^{(t_0)}$
as verified in the previous case, which means that $\tilde{\tilde{c}}_{2}^{(t_0+1)}= \tilde{\tilde{c}}_{2}^{(t_0)}$ does not hold, i.e., $\tilde{\tilde{c}}_{2}^{(t_0+1)}\neq \tilde{\tilde{c}}_{2}^{(t_0)}$.

\bibliographystyle{ieeetr}
\bibliography{references}

\begin{IEEEbiography}[{\includegraphics[width=1.1in,
height=1.35in,clip, keepaspectratio]{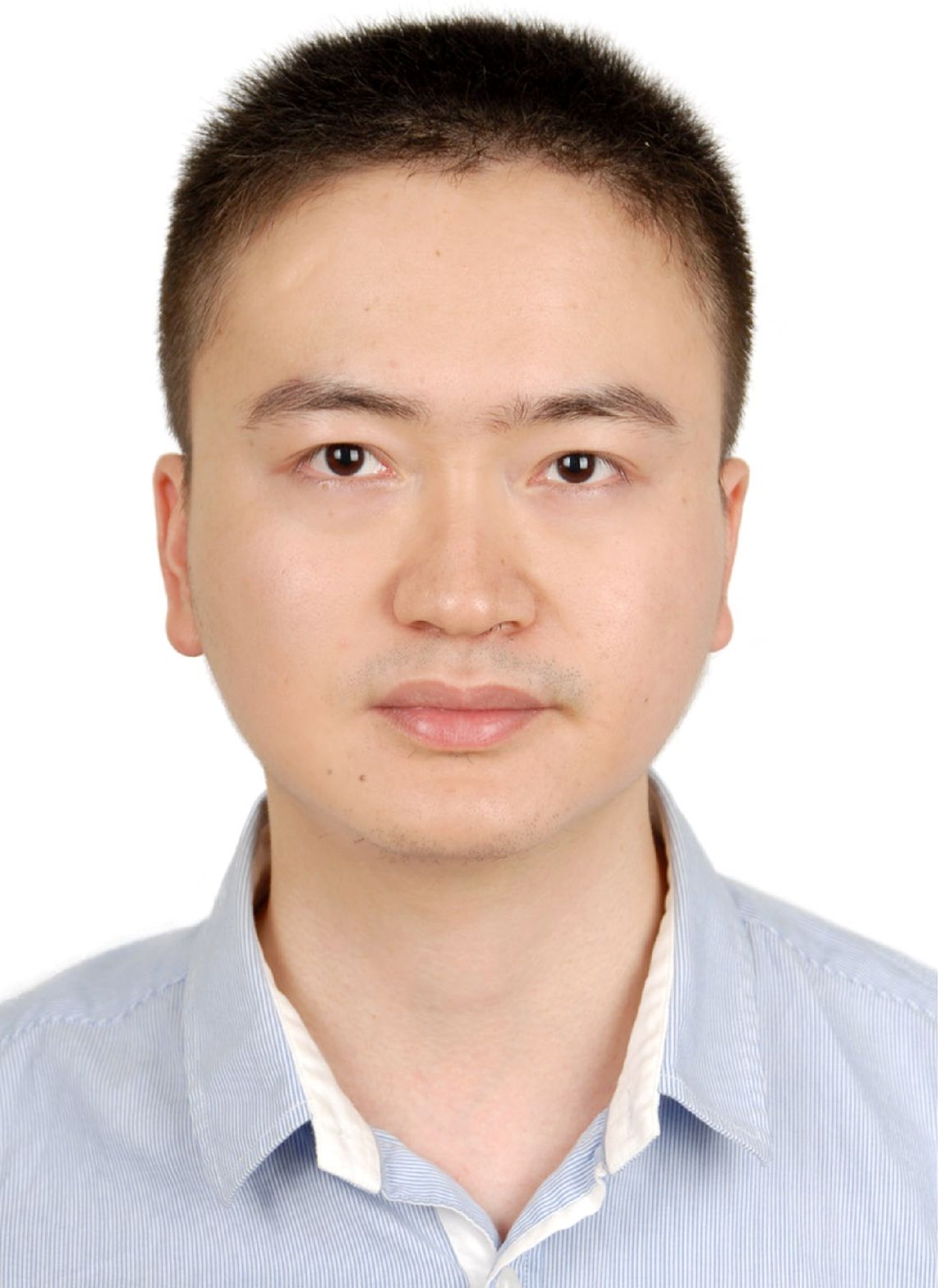}}]{Peng
Xu} received the B.Eng. and the Ph.D. degrees in electronic and information engineering from the University of Science and Technology of China, Anhui, China, in 2009 and 2014, respectively. Since July 2014, he was working as a postdoctoral researchers
 with the Department of Electronic Engineering and Information Science, University of Science and Technology of China, Hefei, China.
 From July 2016, he has been working  at
 the School of Communication and Information Engineering, Chongqing University of Posts and
Telecommunications (CQUPT), Chongqing, China.
 His current research interests include cooperative communications, information theory,  information-theoretic secrecy,
and 5G networks. Dr. Peng Xu received IEEE Wireless Communications Letters Exemplary Reviewer 2015.\end{IEEEbiography}\vspace{-1em}

\begin{IEEEbiography}[{\includegraphics[width=1in,
height=1.25in,clip, keepaspectratio]{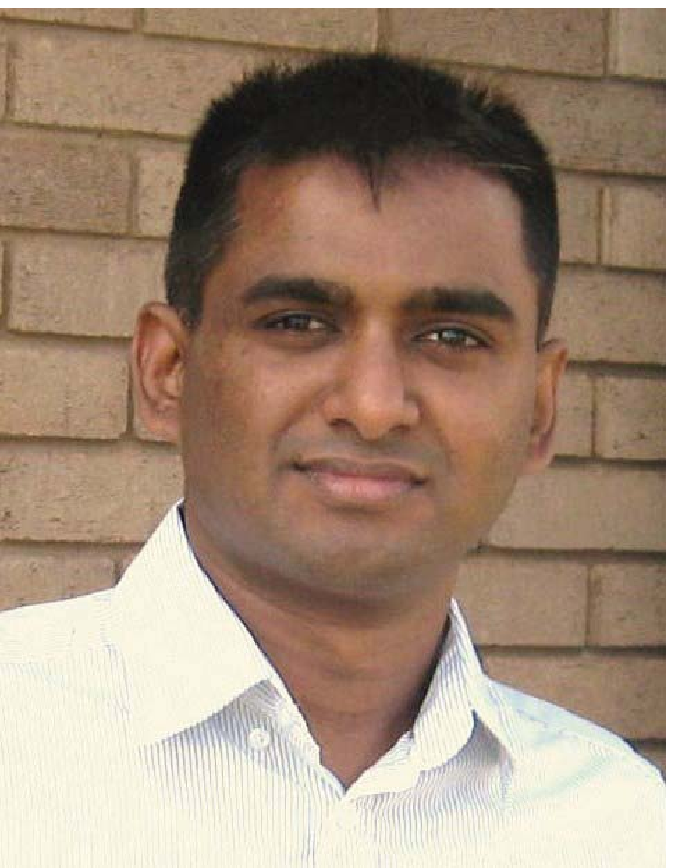}}]{Kanapathippillai Cumanan} (M'10)
received the BSc degree with first class honors in electrical and electronic engineering from the University of Peradeniya, Sri Lanka in 2006 and the PhD degree in signal processing for wireless communications from Loughborough University, Loughborough, UK, in 2009.

He is currently a lecturer at the Department of Electronic Engineering, University of York, UK. From March 2012 to November 2014, he was working as a research associate at School of Electrical and Electronic Engineering, Newcastle University, UK. Prior to this, he was with the School of Electronic, Electrical and System Engineering, Loughborough University, UK. In 2011, he was an academic visitor at Department of Electrical and Computer Engineering, National University of Singapore, Singapore. From January 2006 to August 2006, he was a teaching assistant with Department of Electrical and Electronic Engineering, University of Peradeniya, Sri Lanka. His research interests include physical layer security, cognitive radio networks, relay networks, convex optimization techniques and resource allocation techniques.

Dr. Cumanan was the recipient of an overseas research student award scheme (ORSAS) from Cardiff University, Wales, UK, where he was a research student between September 2006 and July 2007.

  \end{IEEEbiography}\vspace{-1em}

\end{document}